\providecommand{\U}[1]{\protect\rule{.1in}{.1in}}
\newtheorem{theorem}{Theorem}
\newtheorem{corollary}{Corollary}
\newtheorem{lemma}{Lemma}
\newtheorem{proposition}{Proposition}
\newenvironment{proof}[1][Proof]{\textbf{#1.} }{\  \rule{0.5em}{0.5em}}
\def \@removefromreset#1#2{\let \@tempb \@elt
\def \@tempa#1{@&#1}\expandafter \let \csname @*#1*\endcsname \@tempa
\def \@elt##1{\expandafter \ifx \csname @*##1*\endcsname \@tempa \else
\noexpand \@elt{##1}\fi}     \expandafter \edef \csname cl@#2\endcsname{\csname cl@#2\endcsname}     \let \@elt \@tempb
\expandafter \let \csname @*#1*\endcsname \@undefined}
\begin{document}

\title{Geometric quantum discord of an arbitrary two-qudit state: the exact value and
general upper bounds}
\author{Elena R. Loubenets$^{1}$ and Louis Hanotel$^{1}$\\$^{1}$Department of Applied Mathematics, HSE University, \\Moscow 101000, Russia}
\maketitle

\begin{abstract}
The geometric quantum discord of a two-qudit state has been studied in many papers, however, its exact analytical value in the explicit form is known only for a general two-qubit state, a general qubit-qudit state and some special families of two-qudit states. Based on the general Bloch vectors formalism [\emph{J. Phys. A: Math. Theor.} 54 195301 (2021)], we find the explicit exact analytical value of the geometric quantum discord \emph{for an arbitrary two-qudit state of any dimension} via the parameters of its correlation matrix and the Bloch vectors of its reduced states. This new
general analytical result includes all the known exact results on the geometric quantum discord only as particular cases and proves rigorously that the lower bound on the geometric discord presented in [\emph{Phys. Rev. A} 85, 024102 (2012)] constitutes its exact value for each two-qudit state.  Moreover, our new general result allows us to
find for an arbitrary two-qudit state, pure or mixed, the novel upper and lower
bounds on its geometric quantum discord, expressed via the Hilbert space
characteristics of this state.
\end{abstract}

\maketitle

\section{Introduction}

As shown by J. Bell theoretically \cite{Bel:64} and later experimentally by
A. Aspect et al. \cite{Asp.Dal.Rog:82}, the probabilistic description of a
quantum correlation scenario does not, in general, agree with the classical
probability model. Nonclassicality of quantum correlations is one of the
main resources for many quantum information processing tasks. Among these
quantum resources, Bell nonlocality and entanglement are the most studied,
see \cite{1, 2, Che.Alb.Fei:05, Lou.Kuz.Han:24} and references therein for
the quantitative and qualitative relations between them.

Nevertheless, there are quantum states that exhibit nonclassical
correlations even without entanglement, and this led to the notion of the
quantum discord \cite{Oll.Zur:01}, which is conceptually rich, however, it
is very hard to calculate it even for a two-qubit state \cite{Gir.Ade:11b}.

Due to the complexity \cite{Hua:14} of computation of the quantum discord,
there were also introduced related concepts, like the measurement-induced
nonlocality \cite{Luo.Fu:11} and the geometric quantum discord \cite%
{Dak.Ved.Bru:10}.

The geometric quantum discord is a geometric measure of quantum correlations
of a bipartite quantum state, which is defined via the distance from this
state to the set $\Omega $ of all states with the vanishing quantum discord 
\cite{Dak.Ved.Bru:10}. In the present article, the geometric quantum discord 
$\mathcal{D}_{g}(\rho )$ of a two-qubit state $\rho $ on $\mathcal{H}%
_{d_{1}}\otimes \mathcal{H}_{d_{2}}$ is defined via the Hilbert-Schmidt norm
between states: 
\begin{equation}
\mathcal{D}_{g}(\rho ):=\min_{\chi \in \Omega }||\rho -\chi ||_{2}^{2}.
\label{vedral}
\end{equation}%
In other definitions \cite{Pau.Oli.Sar:13, Spe.Ors:13,Rog.Spe.Ill:16} of the
geometric quantum discord, different than in (\ref{vedral}) distances are
used.

Though the optimization problem for the computation of the geometric quantum
discord of a bipartite state is much simpler than that for the quantum
discord, its exact value has been explicitly computed only in some
particular cases, namely, for a general two-qubit state \cite{Dak.Ved.Bru:10}%
, a general qubit-qudit state \cite{Ran.Par:12}, a general pure two-qudit state \cite{Luo.Fu:12} and some special families of mixed
two-qudit states \cite{Ran.Par:12, Luo.Fu:10}.

However, to our knowledge, for a general two-qudit state of an arbitrary
dimension, the explicit exact analytical value of the geometric quantum
discord has not been reported in the literature -- only its lower bounds 
\cite{Ran.Par:12, Luo.Fu:10, Has.Lar.Joa:12, Bag.Dey.Osa:19, Gir.Ade:11}.

Geometric quantum discord is a useful concept with applications to quantum state discrimination \cite{Xiong.etal:23}, decoherence \cite{Aar.Lo.Ade:13, Hua.Qiu:16, Zhu.Hu.Li.etal:22, Xio.Bai.etal:23}, quantum phase
estimation, quantum teleportation and remote state preparation protocols, see \cite{Hu.Hu.etal:18} and references therein. For certain states and certain quantum channels, geometric quantum discord has been shown \cite{Hu.Fan:12,Dao.Ahl:12,Mon.Pau.etal:13} to be more resilient than entanglement in dissipative environments, making it a more robust measure for quantifying quantum correlations in decoherence scenarios. Recent studies suggest that geometric quantum discord is also a valuable quantification of quantum correlations in high-energy physics \cite{Wang.Li.etal:23} and quantum gravity \cite{Ban.Bas.etal:23, Fan.Wen.etal:24} contexts. 

In the present paper, for an arbitrary two-qudit state $\rho$ on $\mathcal{H}%
_{d_{1}}\otimes\mathcal{H}_{d_{2}},$ pure or mixed, we find in the explicit
form the exact analytical value of its geometric quantum discord (\ref%
{vedral}). This new rigorously proved general result indicates that the
lower bound on the geometric quantum discord found in \cite{Ran.Par:12} constitutes its exact value for each two-qudit state
and includes only as particular cases the exact results for: (i) general two-qubit \cite{Dak.Ved.Bru:10} and qubit-qudit states \cite%
{Ran.Par:12}; (ii) an arbitrary pure two-qudit state \cite{Luo.Fu:12}; (iii) some special families of two-qudit states \cite%
{Ran.Par:12,Luo.Fu:10}. It also allows us to find the new general upper
bounds on the geometric quantum discord of an arbitrary two-qudit state in
terms of its Hilbert space characteristics.

The paper is organized as follows. In Section 2, we introduce the main
issues of the general Bloch vectors formalism \cite{Lou.Kul:21} for a
finite-dimensional quantum system on which we build up the calculations in
this paper. In Section 3, we find in the explicit analytical form the exact
value of the geometric quantum discord for an arbitrary two-qudit state. In
Section 4, this new result allows us to find new general upper and lower
bounds on the geometric quantum discord in a general two-qudit case. In
Section 5, we discuss the main results of this paper and their importance for the in practical tasks involving two-qudit quantum systems.

\section{Preliminaries: general Bloch vectors formalism}

In this section, we shortly recall the main issues of the general Bloch
vectors mathematical formalism developed in \cite{Lou.Kul:21} for the
description of properties and behavior of a finite-dimensional quantum
system.

Consider the vector space $\mathcal{L}_{d}$ of all linear operators $X$ on a
complex Hilbert space $\mathcal{H}_{d}$ of a finite dimension $d\geq 2.$
Equipped with the scalar product $\langle X_{i},X_{j}\rangle _{\mathcal{L}%
_{d}}:=\mathrm{tr}\left[ X_{i}^{\dag }X_{j}\right] ,$ $\mathcal{L}_{d}$ is a
Hilbert space of the dimension $d^{2},$ referred to as Hilbert-Schmidt.
Denote by 
\begin{align}
\mathfrak{B}_{\Upsilon _{d}}& :=\left\{ \mathbb{I}_{d},\ \Upsilon
_{d}^{(j)}\in \mathcal{L}_{d},\text{\ }\ j=1,\dots,(d^{2}-1)\right\} ,
\label{1_} \\
\Upsilon _{d}^{(j)}& =\left( \Upsilon _{d}^{(j)}\right) ^{\dagger }\neq 0,\
\ \   
 \mathrm{tr[}\Upsilon _{d}^{(j)}]=0,\ \ \ \ \mathrm{tr[}\Upsilon
_{d}^{(j)}\Upsilon _{d}^{(m)}]=2\delta _{jm},  \notag
\end{align}%
an operator basis in $\mathcal{L}_{d}$ consisting of the identity operator $\mathbb{I}%
_{d}$ on $\mathcal{H}_{d}$ and a tuple $\Upsilon _{d}:=\left( \Upsilon
_{d}^{(1)},\dots,\Upsilon _{d}^{(d^{2}-1)}\right) $ of traceless Hermitian
operators mutually orthogonal in $\mathcal{L}_{d}.$ For $d\geq 3,$ some
properties of a particular basis of this type, resulting in \emph{the
generalized Gell-Mann representation}, were considered in \cite%
{Arv.Mal.Muk:97,Jak.Sie:01, Kim:03,
Byr.Kha:03,Sch.Zha.Lea:04,Kim.Kos:05,Men:06,Ber.Kra:08,Lou:20}.

For an arbitrary qudit state $\rho $, the decomposition via a basis (\ref{1_}%
) constitutes the generalized Bloch representation \cite{Lou.Kul:21} 
\begin{align}
\rho _{d}& =\frac{\mathbb{I}_{d}}{d}\ \mathbb{+}\ \sqrt{\frac{d-1}{2d}}%
\left( r_{\Upsilon _{d}}\cdot \Upsilon _{d}\right) ,  \label{2__} \\
r_{\Upsilon _{d}}&\cdot \Upsilon _{d}:=\sum_{j=1}^{d^{2}-1}r_{\Upsilon
_{d}}^{(j)}\Upsilon _{d}^{(j)}, \\
\ r_{\Upsilon _{d}}& =\sqrt{\frac{d}{2(d-1)}}\text{ }\mathrm{tr}\left[ \rho
_{d}\Upsilon _{d}\right] \in \mathbb{R}^{d^{2}-1},  \label{2_1}
\end{align}%
where $r_{\Upsilon _{d}}\in \mathbb{R}^{d^{2}-1}$ is referred to as the
Bloch vector of a qudit state $\rho _{d}$. For a state $\rho_{d},$ the norm
of its Bloch vector satisfies the relations 
\begin{equation}
\left\Vert r_{\Upsilon _{d}}\right\Vert _{\mathbb{R}^{d^{2}-1}}^{2}=\frac{d}{%
d-1}\left( \mathrm{tr}\left[ \rho _{d}^{2}\right] -\frac{1}{d}\right) \leq 1,
\label{3__}
\end{equation}%
and is independent of the choice of a tuple $\Upsilon _{d}$ in an operator
basis (\ref{1_}). For the maximally mixed state, the Bloch vector is equal
to zero.

If a state $\rho _{d}$ is pure, then the norm of its Bloch vector $%
r_{\Upsilon _{d}}$ is equal to $\left\Vert r_{\Upsilon _{d}}\right\Vert _{%
\mathbb{R}^{d^{2}-1}}=1.$ However, in contrast to a qubit case, for an
arbitrary $d>2$, not any unit vector $r\in \mathbb{R}^{d^{2}-1}$ corresponds
via representation 
\begin{equation}
\tau _{d}=\frac{\mathbb{I}_{d}}{d}\mathbb{+}\sqrt{\frac{d-1}{2d}}\left(
r\cdot \Upsilon _{d}\right)   \label{x-2}
\end{equation}%
to a pure state.

Namely, by Proposition 7 and Theorem 2 in \cite{Lou.Kul:21} a Hermitian
operator (\ref{x-2}) with the unit trace constitutes a pure state if and
only if 
\begin{equation}
\left\Vert r\right\Vert _{\mathbb{R}^{d^{2}-1}}^{2}=1,\text{ \ \ }\left\Vert
\left( r\cdot \Upsilon _{d}\right) ^{(-)}\right\Vert _{0}=\sqrt{\frac{2}{%
d(d-1)}},  \label{x-1}
\end{equation}%
where notation $\left\Vert \cdot \right\Vert _{0}$ means the operator norm
of a linear operator on $\mathcal{H}_{d}$ and notation $X^{(-)}$ -- the
nonnegative operator in the unique decomposition of a self-adjoint operator $X$ via $X=X^{(+)}-X^{(-)}$, where $X^{(\pm)}\geq0,\   
X^{(+)}X^{(-)}=X^{(-)}X^{(+)}=0$. 

For $d=2$ and $\Upsilon _{2}=\sigma=(\sigma _{1},\sigma _{2},\sigma _{3})$, where $\sigma$  is the qubit spin operator on $\mathbb{C}^{2}$, the first of relations in (\ref{x-1}) implies the second one.   

From (\ref{2__}) and (\ref{3__}) it follows that, for a state $\rho _{d}$,
the values of the norms $\left\Vert r_{\Upsilon _{d}}\right\Vert _{\mathbb{R}%
^{d^{2}-1}}$ and $\left\Vert \left( r_{\Upsilon _{d}}\cdot \Upsilon
_{d}\right) ^{(-)}\right\Vert _{0}$ do not depend on a choice of a tuple $%
\Upsilon _{d}$ in decomposition (\ref{2__}).

Note that by Lemma 1 in \cite{Lou:20}, the bounds 
\begin{equation}
\sqrt{\frac{2}{d}}\text{ \ }\leq \text{ \ }\frac{\left\Vert r\cdot \Upsilon
_{d}\right\Vert _{0}}{\left\Vert r\right\Vert _{\mathbb{R}^{d^{2}-1}}}\text{
\ }\leq \text{ \ }\sqrt{\frac{2(d-1)}{d}}\   \label{3___}
\end{equation}%
hold for any vector $r\in \mathbb{R}^{d^{2}-1}$ and any tuple $\Upsilon _{d}$%
.

By Eq. (70) in \cite{Lou.Kul:21}, for any two qudit states $\rho _{d},$ $\rho
_{d}^{\prime }$, the scalar product of their Bloch vectors satisfies the
relation 
\begin{equation}
r_{\Upsilon _{d}}\cdot r_{\Upsilon _{d}}^{\prime }\geq -\frac{1}{d-1}\ ,
\label{4_}
\end{equation}%
where equality holds iff $\mathrm{tr}[\rho _{d}\rho _{d}^{\prime }]=0.$

In view of Theorem 2 in \cite{Lou.Kul:21}, relation (\ref{4_}) and identity $%
\sum_{k}|k\rangle \langle k|=\mathbb{I}_{d}$, valid for any orthonormal
basis $\left\{ |k\rangle \in \mathcal{H}_{d},\text{ }k=1,\dots ,d\right\} ,\ 
$we have the following statement needed for our proof of Theorem 1 in
Section 3.

\begin{proposition}
Representation (\ref{x-2}) establishes the one-to-one correspondence between
orthonormal bases $\left\{ |k\rangle \in \mathcal{H}_{d},k=1,\dots
,d\right\} $ in $\mathcal{H}_{d}$ and sets 
\begin{equation}
\Omega _{\Upsilon _{d}}=\left\{ y_{k}\in \mathbb{R}^{d^{2}-1},\text{ }%
k=1,..,d\right\}   \label{z}
\end{equation}%
of vectors $y_{k}$ in $\mathbb{R}^{d^{2}-1},$ satisfying the relations 
\begin{align}
& \sum_{k=1}^{d}y_{k}=0,\ \left\Vert y_{k}\right\Vert _{\mathbb{R}%
^{d^{2}-1}}=1, \ y_{k_{1}}\cdot y_{k_{2}}=-\frac{1}{d-1},  \ \forall
k_{1}\neq k_{2}, \label{z-1} \\
& \left\Vert \left( y_{k}\cdot \Upsilon _{d}\right) ^{(-)}\right\Vert
_{0}=\sqrt{\frac{2}{d(d-1)}},\ \forall k=1,\dots ,d.
\label{z-2}
\end{align}%
\end{proposition}

For a two-qudit state $\rho _{d_{1}\times d_{2}}$ on $\mathcal{\ H}%
_{d_{1}}\otimes \mathcal{H}_{d_{2}},$ $d_{1},d_{2}\geq 2,$ the representation%
\begin{align}
\rho _{d_{1}\times d_{2}}& =\frac{\mathbb{I}_{d_{1}}\otimes \mathbb{I}%
_{d_{2}}}{d_{1}d_{2}}\text{ }+\text{ }\sqrt{\frac{d_{1}-1}{2d_{1}}}\text{ }%
\left( r_{1}\cdot \Upsilon _{d_{1}}\right) \otimes \frac{\mathbb{I}_{d_{2}}}{%
d_{2}}\text{ }  \label{6__} \\
& +\text{ }\sqrt{\frac{d_{2}-1}{2d_{2}}}\text{ }\frac{\mathbb{I}_{d_{1}}}{%
d_{1}}\otimes \left( r_{2}\cdot \Upsilon _{d_{2}}\right)    +\text{ }\frac{1}{4}\sum_{i,j}T_{\rho _{d_{1}\times d_{2}}}^{(ij)}\left(
\Upsilon _{d_{1}}^{(i)}\mathbb{\otimes }\Upsilon _{d_{2}}^{(j)}\right)  
\notag
\end{align}%
is referred \cite{Lou.Kul:21} to as \emph{the generalized Pauli
representation }and constitutes decomposition (\ref{2__}) via the operator basis of
type (\ref{1_}) with the elements having the tensor product form 
\begin{align}
& \mathbb{I}_{d_{1}}\otimes \mathbb{I}_{d_{2}},\text{\ }\Upsilon
_{d_{1}}^{(i)}\otimes \frac{\mathbb{I}_{d_{2}}}{\sqrt{d_{2}}},\ \frac{%
\mathbb{\ I}_{d_{1}}}{\sqrt{d_{1}}}\otimes \Upsilon _{d_{2}}^{(j)},\text{ \ }%
\Upsilon _{d_{1}}^{(i)}\otimes \Upsilon _{d_{2}}^{(j)},   i =1,\dots ,(d_{1}^{2}-1),\text{ \ \ }j=1,\dots ,(d_{2}^{2}-1),  \label{7__} \\
&\Upsilon _{d_{n}}^{(m)} =\left( \Upsilon _{d_{n}}^{(m)}\right) ^{\dagger
}\neq 0,\ \ \mathrm{tr}\left[ \Upsilon _{d_{n}}^{(m)}\right] =0,  \mathrm{tr}\left[ \Upsilon _{d_{n}}^{(m_{1})}\Upsilon _{d_{n}}^{(m_{2})}%
\right] =2\delta _{m_{1}m_{2}},\text{ }n=1,2,  \notag
\end{align}
In representation (\ref{6__}), 
\begin{align}
r_{1}& =\sqrt{\frac{d_{1}}{2(d_{1}-1)}}\mathrm{tr}\left[ \rho _{d_{1}\times
d_{2}}\left( \Upsilon _{d_{1}}\otimes \mathbb{I}_{d_{2}}\right) \right] \in 
\mathbb{R}^{d_{1}^{2}-1},  \label{8__} \\
r_{2}& =\sqrt{\frac{d_{2}}{2(d_{2}-1)}}\mathrm{tr}\left[ \rho _{d_{1}\times
d_{2}}\left( \mathbb{I}_{d_{1}}\otimes \Upsilon _{d_{2}}\right) \right] \in 
\mathbb{R}^{d_{2}^{2}-1},  \label{9__} \\
& \text{ \ }\left\Vert r_{1}\right\Vert _{\mathbb{R}^{d_{1}^{2}-1}}\leq 1,%
\text{ \ }\left\Vert r_{2}\right\Vert _{\mathbb{R}^{d_{2}^{2}-1}}\leq 1, 
\notag
\end{align}%
are the Bloch vectors of states $\rho _{1}=$ \textrm{tr}$_{%
\mathcal{H}_{2}}[\rho _{d_{1}\times d_{2}}]$ and $\rho _{2}=\mathrm{tr}_{%
\mathcal{H}_{1}}[\rho _{d_{1}\times d_{2}}]$ on $\mathcal{H}_{d_{1}}$ and $%
\mathcal{H}_{d_{2}},$ respectively, reduced from a two-qudit state $\rho
_{d_{1}\times d_{2}}$ and satisfying the relation%
\begin{equation}
\mathrm{tr[}\rho _{j}^{2}\mathrm{]}=\frac{1}{d_{j}}+\frac{d_{j}-1}{d_{j}}%
||r_{j}||_{\mathbb{R}^{d_{j}^{2}-1}}^{2},\text{ \ \ }j=1,2,  \label{9-}
\end{equation}%
while 
\begin{equation}
T_{\rho _{d_{1}\times d_{2}}}^{(ij)} :=\mathrm{tr}\left[ \rho _{d_{1}\times
d_{2}}\left( \Upsilon _{d_{1}}^{(i)}\otimes \Upsilon _{d_{2}}^{(j)}\right) %
\right] \ ,  
i =1,\dots ,d_{1}^{2}-1,\text{ \ \ }j=1,\dots ,d_{2}^{2}-1, \label{10__} 
\end{equation}%
are the elements of the real-valued matrix $T_{\rho _{d_{1}\times d_{2}}}$
referred to as the correlation matrix of a two-qudit state $\rho
_{d_{1}\times d_{2}}$. 

In case of a pure two-qudit state $\rho _{d_{1}\times
d_{2}}$,  $%
d_{1},d_{2}\geq 2,$ by the Schmidt theorem \textrm{tr}$[\rho _{1}^{2}]=%
\mathrm{tr}[\rho _{2}^{2}]$ and, in view of relation (\ref{9-}), this implies%
\begin{equation}
\frac{1}{d_{1}}+\frac{d_{1}-1}{d_{1}}\left\Vert r_{1}\right\Vert _{\mathbb{R}%
^{d_{1}^{2}-1}}^{2}=\frac{1}{d_{2}}+\frac{d_{2}-1}{d_{2}}\left\Vert
r_{2}\right\Vert _{\mathbb{R}^{d_{2}^{2}-1}}^{2}.  \label{07_}
\end{equation}
From the generalized Pauli representation (\ref{6__}) it also follows 
\begin{equation}
\mathrm{tr}[\rho _{d_{1}\times d_{2}}^{2}]=\frac{1}{d_{1}d_{2}}+\frac{d_{1}-1%
}{d_{1}d_{2}}\left\Vert r_{1}\right\Vert _{\mathbb{R}^{d_{1}^{2}-1}}^{2} +
\frac{d_{2}-1}{d_{1}d_{2}}\left\Vert r_{2}\right\Vert _{\mathbb{R}%
^{d_{2}^{2}-1}}^{2} +\frac{1}{4}\sum_{i,j}\left( T_{\rho _{d_{1}\times
d_{2}}}^{(ij)}\right) ^{2},  \label{07__}
\end{equation}%

so that expression (\ref{07__}) and relation $\textrm{\ tr}[\rho _{d_{1}\times d_{2}}^{2}]\leq 1$ imply: 
\begin{equation}
 \frac{d_{1}-1}{d_{1}d_{2}}\left\Vert r_{1}\right\Vert _{\mathbb{R}%
^{d_{1}^{2}-1}}^{2}+\frac{d_{2}-1}{d_{1}d_{2}}\left\Vert r_{2}\right\Vert _{%
\mathbb{R}^{d_{2}^{2}-1}}^{2}    +\frac{1}{4}\sum_{i,j}\left( T_{\rho _{d_{1}\times d_{2}}}^{(ij)}\right)
^{2}\leq \frac{d_{1}d_{2}-1}{d_{1}d_{2}},  \label{10.1}
\end{equation}%
where equality holds\ iff a state $\rho _{d_{1}\times d_{2}}$ is pure. 

Since in equality (\ref{07__}) the values of trace $\mathrm{tr}[\rho
_{d_{1}\times d_{2}}^{2}]$ and the Bloch vectors norms $\left\Vert
r_{1}\right\Vert _{\mathbb{R}^{d_{1}^{2}-1}}^{2}$ and $\left\Vert
r_{2}\right\Vert _{\mathbb{R}^{d_{2}^{2}-1}}^{2}$ do not depend on a choice
of tuples $\Upsilon _{d_{1}},\Upsilon _{d_{2}}$ in decomposition (\ref{6__}%
), the same is true for the sum $\sum_{i,j}\left( T_{\rho _{d_{1}\times
d_{2}}}^{(ij)}\right) ^{2}=\mathrm{tr[}T_{\rho _{d_{1}\times
d_{2}}}^{\dagger }T_{\rho _{d_{1}\times d_{2}}}]$, which constitutes the
trace of the positive operator $T_{\rho _{d_{1}\times d_{2}}}^{\dagger
}T_{\rho _{d_{1}\times d_{2}}}$ on $\mathbb{R}^{d_{2}^{2}-1}.$

If $d_{1}=d_{2}=:d,$ then, for every two-qudit state $\rho _{d\times d}$ on $%
\mathcal{\ H}_{d}\otimes \mathcal{H}_{d},$ $d\geq 2,$ 
\begin{equation}
\left\Vert r_{1}\right\Vert _{\mathbb{R}^{d^{2}-1}}^{2}+\left\Vert
r_{2}\right\Vert _{\mathbb{R}^{d^{2}-1}}^{2}+\frac{d^{2}}{4(d-1)}\mathrm{%
tr[}T_{\rho _{d\times d}}^{\dagger }T_{\rho _{d\times d}}]\leq d+1,
\label{10.2}
\end{equation}%
and the bound (48) in \cite{Lou:20} and the above upper bound in (\ref{3___}%
) imply 
\begin{equation}
\left\Vert T_{\rho _{d\times d}}n\right\Vert _{\mathbb{R}^{d^{2}-1}}^{2}
\leq \sqrt{\frac{2}{d}}\sqrt{\frac{2(d-1)}{d}}\left\Vert T_{\rho _{d\times
d}}n\right\Vert _{\mathbb{R}^{d^{2}-1}}\ \Rightarrow \   
\left\Vert T_{\rho _{d\times d}}n\right\Vert _{\mathbb{R}^{d^{2}-1}} \leq 2%
\frac{\sqrt{d-1}}{d},  \label{11.1}
\end{equation}%
for all $n\in \mathbb{R}^{d^{2}-1}\ ,\left\Vert n\right\Vert \leq \frac{1}{%
\sqrt{d-1}}.$ This implies that, for every two-qudit state and any tuples $%
\Upsilon _{d_{1}},\Upsilon _{d_{2}}$, the spectral (operator) norm $%
\left\Vert T\right\Vert _{0}$ of the correlation matrix $T$ is upper bounded
by 
\begin{align}
\left\Vert T_{\rho _{d\times d}}\right\Vert _{0}& :=\sup_{\left\Vert
n\right\Vert =1}\left\Vert T_{\rho _{d\times d}}n\right\Vert _{\mathbb{R}%
^{d^{2}-1}}  \notag \\
& =\sqrt{d-1}\sup_{\left\Vert n\right\Vert =1}\left\Vert T_{\rho _{d\times
d}}\left( \frac{n}{\sqrt{d-1}}\right) \right\Vert _{\mathbb{R}^{d^{2}-1}}
\label{11.2} \\
& \leq \frac{2(d-1)}{d}.  \notag
\end{align}%

Recall that $\left\Vert T_{\rho _{d\times d}}\right\Vert _{0}^{2}$ is the
maximal eigenvalue of the positive self-adjoint operator $T_{\rho _{d\times
d}}^{\dagger }T_{\rho _{d\times d}}.$

Furthermore, for every separable two-qudit state 
\begin{equation}
\rho _{d\times d}^{(sep)}=\sum_{k}\beta _{k}\rho _{1}^{(k)}\otimes \rho
_{2}^{(k)},\text{ \ \ }\beta _{k}\mathcal{\ >}0,\text{ \ \ }\sum_{k}\beta
_{k}\mathcal{=}1,
\end{equation}%
on $\mathcal{H}_{d}\otimes \mathcal{H}_{d},$ $d\geq 2,$ the correlation
matrix $T_{\rho _{d\times d}^{(sep)}}$ and the Bloch vectors (\ref{8__}), (%
\ref{9__}) have the form%
\begin{equation}
T_{\rho _{d\times d}^{(sep)}} =\frac{2(d-1)}{d}\sum_{k}\beta
_{k}|r_{1}^{(k)}\rangle \langle r_{2}^{(k)}|, \  
r_{1} =\sum_{k}\beta _{k}r_{1}^{(k)},\text{ \ \ }r_{2}=\sum_{k}\beta
_{k}r_{2}^{(k)},  \label{11-3}
\end{equation}%
where $r_{j}^{(k)}$ are the Bloch vectors of states $\rho _{j}^{(k)},$ $%
j=1,2,$ given by (\ref{2_1}), and the operator norm of the correlation
matrix is upper bounded by 
\begin{align}
\left\Vert T_{\rho _{d\times d}^{(sep)}}\right\Vert _{0}& \leq \frac{2(d-1)}{%
d}\sum_{k}\beta _{k}\left\Vert |r_{1}^{(k)}\rangle \langle
r_{2}^{(k)}|\right\Vert _{0}  \label{11-4} \\
& =\frac{2(d-1)}{d}\sum_{k}\beta _{k}\left\Vert r_{1}^{(k)}\right\Vert _{%
\mathbb{R}^{d^{2}-1}}\left\Vert r_{2}^{(k)}\right\Vert _{\mathbb{R}%
^{d^{2}-1}}  \notag \\
& \leq \frac{2(d-1)}{d}\ .  \notag
\end{align}%
The concurrence $\mathrm{C}_{|\psi \rangle \langle \psi |}$ of a pure
two-qudit state $\rho _{d_{1}\times d_{2}}=|\psi \rangle \langle \psi |$ on $%
\mathcal{H}_{d_{1}}\otimes \mathcal{H}_{d_{2}},$ $d_{1},d_{2}\geq 2,$ is
defined by the relation 
\begin{equation}
\mathrm{C}_{|\psi \rangle \langle \psi |}=\sqrt{2\left( 1-\mathrm{tr[}\rho
_{j}^{2}]\right) }\ ,  \label{013}
\end{equation}%
and, in view of Eqs. (\ref{9-}), (\ref{07_}), takes the form%
\begin{equation}
\mathrm{C}_{|\psi \rangle \langle \psi |}=\sqrt{2\frac{d_{i}-1}{d_{j}}\left(
1-\left\Vert r_{j}\right\Vert _{\mathbb{R}^{d_{j}^{2}-1}}^{2}\right) },\text{
\ }\ j=1,2.  \label{014}
\end{equation}%
If we introduce the concurrence $\widetilde{\mathrm{C}}_{|\psi \rangle
\langle \psi |}$ normalized to the unity in case of a maximally entangled
quantum state, as it is done in \cite{Lou.Kul:21}, then%
\begin{equation}
\mathrm{C}_{|\psi \rangle \langle \psi |}=\sqrt{2\frac{d_{k}-1}{d_{k}}}%
\widetilde{\mathrm{C}}_{|\psi \rangle \langle \psi |}\text{, \ \ \ \ \ }%
d_{k}\ =\min \{d_{1},d_{2}\},  \label{014.1}
\end{equation}%
and 
\begin{equation}
\widetilde{\mathrm{C}}_{|\psi \rangle \langle \psi |}=\sqrt{1-\left\Vert
r_{k}\right\Vert _{\mathbb{R}^{d_{k}^{2}-1}}^{2}}\ .  \label{015}
\end{equation}%
In a two-qubit case, $\widetilde{\mathrm{C}}_{|\psi \rangle \langle \psi |}=%
\mathrm{C}_{|\psi \rangle \langle \psi |}.$

For a general state $\rho _{d_{1}\times d_{2}},$ pure or mixed,
concurrence $\mathrm{C}_{\rho }$ is defined via the relation%
\begin{equation}
\mathrm{C}_{\rho_{d_{1}\times d_{2}}}=\inf_{\{\alpha _{i},\psi
_{i}\}}\sum \alpha _{i}\mathrm{C}_{|\psi _{i}\rangle \langle \psi _{i}|},
\label{016}
\end{equation}%
where $\rho _{d_{1}\times d_{2}}=\sum_i \alpha _{i}|\psi _{i}\rangle \langle
\psi _{i}|,$ $\sum_i \alpha _{i}=1\ ,$ $\alpha _{i}>0,$ is a possible convex
decomposition of the state $\rho_{d_{1}\times d_{2}}$ via pure states,
see \cite{Che.Alb.Fei:05,Kim.Das.San:09} and references therein.

\vspace*{0.3cm}

\section{Geometric quantum discord}

A general quantum-classical\footnote{%
In this paper, we refer to the right geometric discord instead of left
discord as in \cite{Dak.Ved.Bru:10}.} state on $\mathcal{H}%
_{d_{1}}\otimes \mathcal{H}_{d_{2}}$ has the form  
\begin{align}
\chi _{d_{1}\times d_{2}}& =\sum_{k=1}^{d_{2}}\alpha _{k}\sigma _{k}\otimes
|k\rangle \langle k|,\text{ \ \ }\alpha _{k}\geq 0,\text{ \ \ }%
\sum_{k}\alpha _{k}=1,  \label{11__} \\
|k\rangle & \in \mathcal{H}_{d_{2}},\text{ }\ \text{\ }k=1,\dots ,d_{2},%
\text{ \ }\langle k_{j_{1}}|k_{j_{2}}\rangle =\delta _{j_{1}j_{2}},\text{ \ }
  \ \sum_{k=1}^{d_{2}}\text{ }|k\rangle \langle k|\text{ }=%
\mathbb{I}_{d_{2}}.  \notag
\end{align}%

For short, we further omit the below indices at states indicating its
dimensions at two sites.

In order to find the geometric quantum discord $\mathcal{D}_{g}(\rho
):=\min_{\chi }\mathrm{tr}[(\rho -\chi )^{2}]$ of a state $\rho $, let us
consider the decomposition of the difference between states $\rho $ and $%
\chi $ via their generalized Pauli representations (\ref{6__}). We have 

\begin{align}
\rho-\chi &  =\sqrt{\frac{d_{1}-1}{2d_{1}}}\left(  (r_{1}-\sum_{k=1}^{d_2}\alpha
_{k}x_{k})\cdot \Upsilon_{d_{1}}\right)  \otimes\frac{\mathbb{I}_{{d_{2}}}}{d_{2}}\label{13__}
  +\sqrt{\frac{d_{2}-1}{2d_{2}}}\text{ }\frac{\mathbb{I}_{{d_{1}%
}}}{d_{1}}\otimes\left(  (r_{2}-\sum_{k=1}^{d_2}\alpha_{k}y_{k})\cdot \Upsilon_{d_{2}%
}\right)\\
&  +\frac{1}{4}\sum_{\substack{i=1,\dots,d_{1},\\j=1,\dots,d_{2}}}\left(
T_{\rho}^{(ij)}-2\sqrt{\frac{(d_{1}-1)(d_{2}-1)}{d_{1}d_{2}}}\sum_{k=1}^{d_2}%
\alpha_{k}x_{k}^{(i)}y_{k}^{(j)}\right)  \Upsilon_{d_{1}}^{(i)}\otimes
\Upsilon_{d_{2}}^{(j)},\nonumber
\end{align}
where the Bloch vectors $r_{1}\in \mathbb{R}^{d_{1}^{2}-1}$
and $r_{2}\in \mathbb{R}^{d_{2}^{2}-1}$ and $T_{\rho }^{(ij)}$ are defined
in (\ref{8__}), (\ref{9__}) and (\ref{10__}), respectively, and have norms $%
\left\Vert r_{1}\right\Vert _{\mathbb{R}^{d_{1}^{2}-1}}\leq 1,\left\Vert
r_{2}\right\Vert _{\mathbb{R}^{d_{2}^{2}-1}}\leq 1,$ whereas for $k=1,\dots
,d_{2}$, 

\begin{align}
x_{k}& =\sqrt{\frac{d_{1}}{2\left( d_{1}-1\right) }}\mathrm{tr}\left[ \sigma
_{k}\Upsilon _{d_{1}}\right] \in \mathbb{R}^{d_{1}^{2}-1},\text{\ \ }%
\left\Vert x_{k}\right\Vert _{\mathbb{R}^{d_{1}^{2}-1}}\leq 1,  \label{14__}
\\
y_{k}& =\sqrt{\frac{d_{2}}{2\left( d_{2}-1\right) }}\langle k|\Upsilon
_{d_{2}}|k\rangle \in \mathbb{R}^{d_{2}^{2}-1},\text{ \ \ }\left\Vert
y_{k}\right\Vert _{\mathbb{R}^{d_{2}^{2}-1}}=1,  \notag
\end{align}%
are, respectively, the Bloch vectors of states $\sigma _{k}$ on $%
\mathcal{\ H}_{d_{1}}$ and mutually orthogonal pure states $|k\rangle
\langle k|,$ $\sum_{k=1}^{d_{2}}|k\rangle \langle k|=\mathbb{I}_{d_{2}},$ on 
$\mathcal{\ H}_{d_{2}}.$

By Proposition 1, representation (\ref{x-2}) establishes the one-to-one
correspondence between orthonormal bases in $\mathcal{H}%
_{d_{2}}$ and sets $\Omega _{\Upsilon _{d_{2}}}=\left\{
y_{k}\in \mathbb{R}^{d_{2}^{2}-1},\text{ }k=1,\dots,d_{2}\right\} $ of vectors
in $\mathbb{R}^{d_{2}^{2}-1}$, satisfying the relations: 
\begin{align}
& \sum_{k=1}^{d_{2}}y_{k}=0,\ \left\Vert y_{k}\right\Vert _{\mathbb{\ R}%
^{d_{2}^{2}-1}}=1,\ y_{k_{1}}\cdot y_{k_{2}}=-\frac{1}{d_{2}-1},   \ \forall
k_{1}\neq k_{2}, \label{15__} \\
& \left\Vert \left( y_{k}\cdot \Upsilon _{d_{2}}\right) ^{(-)}\right\Vert
_{0}=\sqrt{\frac{2}{d_{2}(d_{2}-1)}},\ \forall k=1,\dots ,d_{2} \ .
\label{15__1}
\end{align}%
Eq. (\ref{13__}) implies 
\begin{align}
& \mathrm{tr}[\left( \rho -\chi \right) ^{2}]=\frac{d_{1}-1}{d_{1}d_{2}}%
\left\Vert r_{1}-\sum_{k=1}^{d_{2}}\alpha _{k}x_{k}\right\Vert _{\mathbb{R}%
^{d_{1}^{2}-1}}^{2}  +\frac{d_{2}-1}{d_{1}d_{2}}\left\Vert r_{2}-\sum_{k=1}^{d_{2}}\alpha
_{k}y_{k}\right\Vert^{2}_{\mathbb{R}^{d_{2}^{2}-1}}  \label{16__} \\
& +\frac{1}{4}\sum_{i,j}\left( T_{\rho }^{(ij)}-2\sqrt{\frac{%
(d_{1}-1)(d_{2}-1)}{d_{1}d_{2}}}\sum_{k=1}^{d_{2}}\alpha
_{k}x_{k}^{(i)}y_{k}^{(j)}\right) ^{2}  \notag
\end{align}%
and, under conditions (\ref{15__}), relation (\ref{16__}) reduces to 
\begin{align}
& \mathrm{tr}[(\rho -\chi )^{2}]=\frac{d_{2}-1}{d_{1}d_{2}}\left\Vert
r_{2}\right\Vert _{\mathbb{R}^{d^{2}-1}}^{2}+\frac{1}{4}\mathrm{tr}[T_{\rho
}^{\dagger }T_{\rho }]  \label{16.1} \\
& +\frac{d_{1}-1}{d_{1}}\sum_{k=1}^{d_{2}}\left\Vert \alpha _{k}x_{k}-\frac{%
r_{1}}{d_{2}}-\frac{1}{2}\sqrt{\frac{d_{1}\left( d_{2}-1\right) }{%
d_{2}\left( d_{1}-1\right) }}T_{\rho }y_{k}\right\Vert _{\mathbb{R}%
^{d_{1}^{2}-1}}^{2}  \notag \\
& +\frac{1}{d_{1}}\sum_{k=1}^{d_{2}}\left[ \alpha _{k}-\frac{1}{d_{2}}-\frac{%
d_{2}-1}{d_{2}}(r_{2}\cdot y_{k})\right] ^{2}  \notag \\
& -\frac{(d_{2}-1)^{2}}{d_{2}^{2}d_{1}}\sum_{k=1}^{d_{2}}(r_{2}\cdot
y_{k})^{2}-\frac{d_{2}-1}{4d_{2}}\sum_{k=1}^{d_{2}}\left\Vert T_{\rho
}y_{k}\right\Vert _{\mathbb{R}^{d_{1}^{2}-1}}^{2}\ .  \notag
\end{align}

From relation (\ref{16.1}) it follows that, for a fixed set $\{y_{k}\}$ the
minimum of $\mathrm{tr}[(\rho -\chi )^{2}]$ over $x_{k}$ and $\alpha _{k}$
is attained at 
\begin{align}
\alpha _{k}x_{k}& =\frac{r_{1}}{d_{2}}-\frac{1}{2}\sqrt{\frac{d_{1}\left(
d_{2}-1\right) }{d_{2}\left( d_{1}-1\right) }}T_{\rho }y_{k},  \label{16.2} \\
\alpha _{k}& =\frac{1}{d_{2}}+\frac{d_{2}-1}{d_{2}}(r_{2}\cdot
y_{k})\Rightarrow \sum_{k=1}^{d_{2}}\alpha _{k}=1,  \notag
\end{align}%
such that%
\begin{equation}
\left\Vert \frac{r_{1}}{d_{2}}-\frac{1}{2}\sqrt{\frac{d_{1}\left(
d_{2}-1\right) }{d_{2}\left( d_{1}-1\right) }}T_{\rho }y_{k}\right\Vert _{%
\mathbb{R}^{d_{1}^{2}-1}}\leq 1.
\end{equation}%
Taking this into account in relation (\ref{16.1}), we come to the following
statement.
\begin{proposition}
For every two-qudit state $\rho $ on $\mathcal{H}_{d_{1}}\otimes \mathcal{H}%
_{d_{2}}$, the geometric quantum discord $\mathcal{D}_{g}(\rho )=\min_{\chi }%
\mathrm{tr}[(\rho -\chi )^{2}]$ is given by%
\begin{equation}
\mathcal{D}_{g}(\rho ) =\frac{d_{2}-1}{d_{1}d_{2}}\left\Vert
r_{2}\right\Vert _{\mathbb{R}^{d_{2}^{2}-1}}^{2}+\frac{1}{4}\mathrm{tr}%
[T_{\rho }^{\dagger }T_{\rho }]  -\max_{\Omega _{\Upsilon _{d_{2}}}}\mathrm{tr}\left[ \left( \frac{d_{2}-1}{%
d_{1}d_{2}}|r_{2}\rangle \langle r_{2}|\text{ }+\text{ }\frac{1}{4}T_{\rho
}^{\dagger }T_{\rho }\right) \Pi _{\Omega _{\Upsilon _{d_{2}}}}\right] , 
\label{17__}
\end{equation}%
where: (i) $T_{\rho }$ is the correlation matrix (\ref{10__}) of a state $%
\rho $ and $r_{2}$ is the Bloch vector (\ref{9__}) of the reduced state $%
\rho _{2}=\mathrm{tr}_{\mathcal{H}_{1}}[\rho ]$ on $\mathcal{H}_{d_{2}}$
within decomposition (\ref{6__}) specified with arbitrary tuples $\Upsilon
_{d_{1}}$ and $\Upsilon _{d_{2}};$ (ii) the positive Hermitian operator $\Pi
_{\Omega _{\Upsilon _{d_{2}}}}$ on $\mathbb{R}^{d_{2}^{2}-1}$ is defined by
the relation 
\begin{equation}
\Pi _{\Omega _{\Upsilon _{d_{2}}}}:=\frac{d_{2}-1}{d_{2}}%
\sum_{k=1}^{d_{2}}|y_{k}\rangle \langle y_{k}|,  \label{18__}
\end{equation}%
where%
\begin{equation}
\Omega _{\Upsilon _{d_{2}}}=\left\{ y_{k}\in \mathbb{R}^{d_{2}^{2}-1},\text{ 
}k=1,..,d_{2}\right\} \subset \mathbb{R}^{d_{2}^{2}-1}  \label{19___}
\end{equation}%
is a set of linear dependent vectors in $\mathbb{R}^{d_{2}^{2}-1},$
satisfying relations (\ref{15__}), (\ref{15__1}). In (\ref{17__}), notations 
$|r_{2}\rangle $ and $\langle r_{2}|$ mean, the column vector and the line
vector, corresponding to  tuple $r_{2}=(r_{2}^{(1)},\dots
,r_{2}^{(d_{2}^{2}-1)})\in \mathbb{R}^{d_{2}^{2}-1}$.
\end{proposition}

The following statement is proved in Appendix A.

\begin{lemma}
For any tuple $\Upsilon _{d_{2}},$ a positive operator (\ref{18__}) on $\mathbb{R}^{d_{2}^{2}-1}$ is an
orthogonal projection of rank $(d_{2}-1).$
\end{lemma}

Taking into account Proposition 2 and Lemma 1, we proceed to introduce for a two-qudit state $\rho $, pure or mixed and 
of any dimension, the explicit exact analytical value of its    
geometric quantum discord $\mathcal{D}_{g}(\rho )$ in terms of 
characteristics of this state within the generalized Pauli representation (\ref{6__}). 

\begin{theorem}
For an arbitrary two-qudit state $\rho $ on $\mathcal{H}_{d_{1}}\otimes 
\mathcal{H}_{d_{2}},$ $d_{1},d_{2}\geq 2,$ the geometric quantum discord
equals to%
\begin{equation}
\mathcal{D}_{g}(\rho ) =\frac{d_{2}-1}{d_{1}d_{2}}\left\Vert
r_{2}\right\Vert _{\mathbb{R}^{d_{2}^{2}-1}}^{2}+\frac{1}{4}\mathrm{tr}%
[T_{\rho }^{\dagger }T_{\rho }]-\sum_{n=1}^{d_{2}-1}\eta _{n} 
=\sum_{n=d_{2}}^{d_{2}^{2}-1}\eta _{n},\label{25__} 
\end{equation}%
where $\eta _{1}\geq \eta _{2}\geq \ldots \geq \eta _{d_{2}^{2}-1}\geq 0$
are the eigenvalues of the positive Hermitian operator 
\begin{equation}
G(\rho )=\frac{d_{2}-1}{d_{1}d_{2}}|r_{2}\rangle \langle r_{2}|\text{ }+%
\text{ }\frac{1}{4}T_{\rho }^{\dagger }T_{\rho }  \label{26}
\end{equation}%
on $\mathbb{R}^{d_{2}^{2}-1}$  listed in the decreasing
order with the corresponding algebraic multiplicities. The eigenvalues of the positive operator $
G(\rho )$ and the values of the norm $\left\Vert r_{2}\right\Vert _{\mathbb{R}^{d_{2}^{2}-1}}^{2}$ and the
trace $\mathrm{tr}[T_{\rho }^{\dagger }T_{\rho }]$ are independent on a choice of tuples $\Upsilon _{d_{1}}$ and $%
\Upsilon_{d_{2}}$ within representation (\ref{6__}).
\end{theorem}

\begin{proof}
Let the Bloch vector $r_{2}\in \mathbb{R}^{d_{2}^{2}-1}$ and the correlation
matrix $T_{\rho }$  in (\ref{17__}) be defined within decomposition (\ref{6__}%
) for some arbitrary tuples $\Upsilon _{d_{1}}$ and $\Upsilon _{d_{2}}.$ As
indicated in Section 2, the values of the norm $\left\Vert r_{2}\right\Vert
_{\mathbb{R}^{d_{2}^{2}-1}}^{2}\leq 1$ and the trace $\mathrm{tr}[T_{\rho
}^{\dagger }T_{\rho }]$ do not depend on a choice of tuples $\Upsilon
_{d_{1}}$ and $\Upsilon _{d_{2}}$ and are determined only by a state $\rho $%
. By Lemma 1 every positive operator $\Pi _{\Omega _{\Upsilon _{d_{2}}}}$,
given by (\ref{18__}), is an orthogonal projection of rank $(d_{2}-1)$, so
that it has eigenvalue $1$ of multiplicity $(d_{2}-1)$ and the eigenvalue $0$
with multiplicity $d_{2}(d_{2}-1).$ This and the von Neumann inequality \cite%
{Mir:75} $\mathrm{tr}[AB]\leq \sum a_{i}b_{i}$, which is valid for any two positive
operators $A$ and $B,$ with eigenvalues $a_{i}\geq 0$ and $b_{i}\geq 0$,
listed in decreasing order, imply that, for each $\Pi _{\Omega _{\Upsilon
_{d_{2}}}}$, the trace $\mathrm{tr}[G(\rho )\Pi _{\Omega _{\Upsilon
_{d_{2}}}}]$,  standing under the maximum in (\ref{17__}),  is upper bounded by
\begin{equation}
\mathrm{tr}[G(\rho )\Pi _{\Omega _{\Upsilon _{d_{2}}}}]\leq
\sum_{k=1}^{d_{2}-1}\eta _{k}.  \label{0}
\end{equation}%
Therefore, in order to prove the exact analytical expression (\ref{25__}), we have to present some projection $\Pi _{\Omega _{\Upsilon
_{d_{2}}}}^{\prime }$ on which the upper bound (\ref{0}) is attained.

Let us introduce projections $\Pi _{\Omega _{\widetilde{\Upsilon }_{d_{2}}}}=\frac{%
d_{2}-1}{d_{2}}\sum_{k=1}^{d_{2}}|y_{k}\rangle \langle y_{k}|$, which are defined via
vectors $y_{k}\in \Omega _{\widetilde{\Upsilon }_{d_{2}}}$ satisfying
relations (\ref{15__}) and the  condition (\ref{15__1}), specified for some tuple $%
\widetilde{\Upsilon }_{d_{2}}\neq \Upsilon _{d_{2}},$ which we choose below.
Let 
\begin{equation}
\widetilde{\Upsilon }_{d_{2}}^{(j)}=\mathrm{v}_{j}\cdot \Upsilon _{d_{2}},%
\text{ \ }j=1,...,(d_{2}^{2}-1),  \label{new}
\end{equation}%
be the decomposition of traceless hermitian operators $\widetilde{\Upsilon }%
_{d_{2}}^{(j)}$ -- elements of a tuple $\widetilde{\Upsilon }_{d_{2}}$ -- in
the basis (\ref{1_}) specified with tuple $\Upsilon _{d_{2}}.$ The set of
vectors $\{\mathrm{v}_{j}\in \mathbb{R}^{d_{2}^{2}-1},$ $j=1,...,(d_{2}^{2}-1)\}$
constitutes an orthonormal basis, see Eq. 18 in \cite{Lou.Kul:21}.

For a given projection 
\begin{equation}
\widetilde{\Pi }_{\widetilde{\Omega} _{\widetilde{\Upsilon }_{d_{2}}}}=\frac{d_{2}-1}{%
d_{2}}\sum_{k=1}^{d_{2}}|\widetilde{y}_{k}\rangle \langle \widetilde{y}_{k}|,%
\text{ \ \ }k=1,...,d_{2},\text{\ }  \label{_1}
\end{equation}%
denote by $\widetilde{g}_{m},$ $m=1,\dots ,d_2^{2}-1,$ its mutually orthogonal
eigenvectors, where the first $(d_{2}-1)$ eigenvectors correspond to
eigenvalue $1$ and others to the eigenvalue $0$. By the spectral theorem, we
have 
\begin{equation}
G(\rho )=\sum_{n=1}^{d_{2}^{2}-1}\eta _{n}|e_{n}\rangle \langle e_{n}|,\ \ \ 
\widetilde{\Pi }_{\widetilde{\Omega}_{\widetilde{\Upsilon }%
_{d_{2}}}}=\sum_{m=1}^{d_{2}-1}|\widetilde{g}_{m}\rangle \langle \widetilde{g%
}_{m}|.  \label{_2}
\end{equation}%
To projection (\ref{_1}) define via the unitary operator $\mathrm{U}%
=\sum_{n}|e_{n}\rangle \langle \widetilde{g}_{n}|$ the projection 
\begin{equation}
\mathrm{U}\widetilde{\Pi }_{\widetilde{\Omega} _{\widetilde{\Upsilon }_{d_{2}}}}\mathrm{U%
}^{\dagger }=\frac{d_{2}-1}{d_{2}}\sum_{k=1}^{d_{2}}|y'_{k}\rangle \langle
y'_{k}|=\sum_{n=1}^{d_{2}-1}|e_{n}\rangle \langle e_{n}|,  \label{_3}
\end{equation}%
where vectors $y'_{k}=\mathrm{U}\widetilde{y}_{k}\in \mathbb{R}^{d_{2}^{2}-1}$
satisfy relations (\ref{15__}) and also relation (\ref{15__1}) 
\begin{equation}
\sqrt{\frac{2}{d_{2}(d_{2}-1)}} =\left\Vert \left( \widetilde{y}_{k}\cdot 
\widetilde{\Upsilon }_{d_{2}}\right) ^{(-)}\right\Vert _{0}  =\left\Vert \left( y'_{k}\cdot \widetilde{\Upsilon }_{d_{2}}^{\prime
}\right) ^{(-)}\right\Vert _{0},  \label{_4}
\end{equation}%
but with respect to  tuple $\widetilde{\Upsilon }_{d_{2}}^{\prime }$ with
elements%
\begin{equation}
\left( \widetilde{\Upsilon }_{d_{2}}^{\prime }\right) ^{(m)}=\sum_{l}\mathrm{%
U}_{lm}^{\dagger }\widetilde{\Upsilon }_{d_{2}}^{(l)}.  \label{_5}
\end{equation}%
Substituting (\ref{new}) into (\ref{_5}), we derive%
\begin{equation}
\left( \widetilde{\Upsilon }_{d_{2}}^{\prime }\right) ^{(m)}=\sum_{j}\left(
\sum_{l}\mathrm{U}_{lm}^{\dagger }\mathrm{v}_{l}^{(j)}\right) \Upsilon
_{d_{2}}^{(j)}.  \label{_5.1}
\end{equation}%
Choosing in (\ref{new}) the orthonormal basis $\left\{ \mathrm{v}_{l}\in 
\mathbb{R}^{d_{2}^{2}-1},\text{ }l=1,...,(d_2^{2}-1)\right\} $ with components 
$\mathrm{v}_{l}^{(j)}=\mathrm{U}_{jl},$ we have%
\begin{equation}
\sum_{l}\mathrm{U}_{lm}^{\dagger }\mathrm{v}_{l}^{(j)}=\sum_{l}\mathrm{U}%
_{jl}\mathrm{U}_{lm}^{\dagger }=\delta _{jm}\text{}\Rightarrow \text{ \
\ }\widetilde{\Upsilon }_{d_{2}}^{\prime }=\Upsilon _{d_{2}}.  \label{_5.2}
\end{equation}%
Therefore, under the above unitary transform of projection (\ref{_1}) and the
specific choice via (\ref{new}) of a tuple $\widetilde{\Upsilon }_{d_{2}}$ in
(\ref{_1}), we come to the projection 
\begin{equation}
\mathrm{U}\text{ }\widetilde{\Pi }_{\Omega _{\widetilde{\Upsilon }_{d_{2}}}}%
\mathrm{U}^{\dagger }=\frac{d_{2}-1}{d_{2}}\sum_{k=1}^{d_{2}}|y_{k}\rangle
\langle y_{k}|=\Pi _{\Omega _{\Upsilon _{d_{2}}}}^{^{\prime }},  \label{_6}
\end{equation}%
which is included into the set of projections over which the maximum in (\ref%
{17__}) is considered. Taking into account that by relation (\ref{_3}), $\Pi
_{\Omega _{\Upsilon _{d_{2}}}}^{^{\prime
}}=\sum_{n=1}^{d_{2}-1}|e_{n}\rangle \langle e_{n}|$ we have\footnote{%
For our further consideration in Section 4, based on the proof of Theorem 1,
we also formulate in Proposition 4 of Appendix A the general statement on $%
\max_{\Omega _{y}}\mathrm{tr}[A\Pi _{\Omega _{y}}]$ for any positive
Hermitian operator $A$ on $\mathbb{R}^{d^{2}-1}.$}  
\begin{equation}
\mathrm{tr}[G(\rho )\Pi _{\Omega _{\Upsilon _{d_{2}}}}^{^{\prime }}] =%
\mathrm{tr}\left[ \sum_{n=1}^{d_{2}^{2}-1}\eta _{n}|e_{n}\rangle \langle
e_{n}|\text{ }\sum_{m=1}^{d_{2}-1}|e_{m}\rangle \langle e_{m}|\right]  
 =\sum_{n=1}^{d_{2}-1}\eta _{n}.  \label{_7}
\end{equation}%
Eqs. (\ref{0}), (\ref{_7}) prove the statement.
\end{proof}

The new general exact result (\ref{25__}) proved by Theorem 1 indicates that
the lower bound on the geometric quantum discord presented in \cite%
{Ran.Par:12} is attained on every two-qudit state. Moreover, this new exact 
result on the geometric quantum discord includes only as particular cases
all the exact expressions known \cite%
{Dak.Ved.Bru:10, Ran.Par:12} for some particular mixed states.

We note that, in contrast to the derivation of the lower bound in \cite%
{Ran.Par:12} via the Pauli decomposition with the generalized Gell-Mann
operators, our derivation of the exact value (\ref{25__}) is based on the
Pauli decomposition with respect to any operator basis of the from (\ref{7__}%
). Also, the normalization coefficients in (\ref{25__}) are different from
those in \cite{Ran.Par:12} and satisfy the general relations derived in \cite%
{Lou.Kul:21} and presented in short in Section 2.

The following statement shows that, in case of a pure two-qudit state, the
new exact general result (\ref{25__}) in Theorem 1 includes as a particular case the expression \cite{ Luo.Fu:12} for the geometric quantum discord of a pure two-qudit state, which was derived in \cite{ Luo.Fu:12} directly from the definition (\ref{vedral}).

\begin{corollary}
For every pure two-qudit state $\rho _{\psi }=|\psi \rangle \langle \psi |$
on $\mathcal{H}_{d}\otimes \mathcal{H}_{d},$ $d\geq 2,$ the geometric
quantum discord is given by%
\begin{equation}
\mathcal{D}_{g}(\rho _{\psi })=\frac{1}{2}\mathrm{C}_{\rho _{\psi }}^{2}\leq
2\mathcal{N}^{2}_{\rho _{\psi }},  \label{26_1}
\end{equation}%
where the equality in the right-hand side holds for a pure two-qubit state. Here, $\mathrm{C}_{\rho _{\psi }}$ is the concurrence (\ref{013}) of a pure
two-qudit state $\rho _{\psi }$ and $\mathcal{N}_{\rho _{\psi }}$ is its
negativity\footnote{For a pure two-qudit state, the negativity takes the form  $\sum_{1\leq k<m\leq d}\sqrt{\mu _{k}\mu _{m}}$, see, for example, in Section 4 of \cite{Lou.Nam:22}.}. 
For a maximally entangled pure two-qudit state $\rho _{\psi _{\max }}$,%
\begin{equation}
\mathcal{D}_{g}(\rho _{\psi _{\max }})=\frac{d-1}{d}.  \label{26__2}
\end{equation}
\end{corollary}

\begin{proof}
Consider first the geometric quantum discord for a pure two-qubit state. As
it is found in Theorem 2 of \cite{Lou.Kuz.Han:24}, for a pure two-qubit
state, the eigenvalues of the positive operator $T_{\rho _{\psi }}^{\dagger
}T_{\rho _{\psi }}$ are equal to $1,$ $\mathrm{C}_{\rho _{\psi }}^{2}$, $%
\mathrm{C}_{\rho _{\psi }}^{2}$ and if $\left\Vert r_{2}\right\Vert _{%
\mathbb{R}^{3}}^{2}=1-\mathrm{C}_{\rho _{\psi }}^{2}\neq 0$ (that is, a pure
state $|\psi \rangle $ is not maximally entangled), then the Bloch vector $%
r_{2}\in \mathbb{R}^{3}$ constitutes \cite{Lou.Kuz.Han:24} the eigenvector
of matrix $T_{\rho _{\psi }}^{\dagger }T_{\rho _{\psi }}.$ Therefore, if a
pure two-qubit state $|\psi \rangle $ is not maximally entangled, then, in
view of the spectral theorem, the positive operator $G(\rho _{\psi })$ in (%
\ref{26}) takes the form%
\begin{align}
& G(\rho_{\psi } )=\frac{1}{4}|r_{2}\rangle \langle r_{2}|+\frac{1}{4}\frac{%
|r_{2}\rangle \langle r_{2}|}{\left\Vert r_{2}\right\Vert _{\mathbb{R}%
^{3}}^{2}}+\frac{1}{4}\mathrm{C}_{\rho_{\psi } }^{2}\left( |\mathrm{v}_{1}\rangle
\langle \mathrm{v}_{1}|+|\mathrm{v}_{2}\rangle \langle \mathrm{v}%
_{2}|\right)  \notag  \\
& =\frac{1}{4}\left( \left\Vert r_{2}\right\Vert _{\mathbb{R}%
^{3}}^{2}+1\right) \frac{|r_{2}\rangle \langle r_{2}|}{\left\Vert
r_{2}\right\Vert _{\mathbb{R}^{3}}^{2}}+\frac{1}{4}\mathrm{C}_{\rho_{\psi }
}^{2}\left( |\mathrm{v}_{1}\rangle \langle \mathrm{v}_{1}|+|\mathrm{v}%
_{2}\rangle \langle \mathrm{v}_{2}|\right) \ , \label{26_2}  
\end{align}%
where $|\mathrm{v}_{1}\rangle ,$ $|\mathrm{v}_{2}\rangle $ are two mutually
orthogonal eigenvectors of $T_{\rho _{\psi }}^{\dagger }T_{\rho _{\psi }}$
corresponding to the eigenvalue $\mathrm{C}_{\rho _{\psi }}^{2}$ with
multiplicity $2$. Representation (\ref{26_2}) implies that the eigenvalues
of $G(\rho _{\psi })$ are equal to 
\begin{equation}
\eta _{1}=\frac{1+\left\Vert r_{2}\right\Vert _{\mathbb{R}^{3}}^{2}}{4}\text{%
, }\ \eta _{2,3}=\frac{\mathrm{C}_{\rho _{\psi }}^{2}}{4}.
\end{equation}%
For a maximally entangled two-qubit state $|\psi _{\max }\rangle ,$ the
Bloch vector $r_{2}=0,$  $T_{\rho _{\psi _{\max }}}^{\dagger }T_{\rho _{\psi
\max }}=\mathbb{I}_{\mathbb{R}^3}$ and $G(\rho _{\psi _{\max }})=\frac{1}{4}\mathbb{I}%
_{\mathbb{R}^3}.$ Thus, for any pure two-qubit state $|\psi \rangle ,$ by (\ref{25__})
we have%
\begin{equation}
\mathcal{D}_{g}(\rho _{\psi })=\sum_{n=2}^{3}\eta _{n}=\frac{1}{2}\mathrm{C}%
_{\rho _{\psi }}^{2}.  \label{26_3}
\end{equation}%
The value of the geometric quantum discord of a two-qubit state via its negativity $%
\mathcal{N}_{\rho _{\psi }}$ follows from (\ref{26_3}) and relation $\mathrm{C}_{\rho _{\psi
}}=2\mathcal{N}_{\rho _{\psi }}$ valid for every pure two-qubit state.

Let $d>2.$ Recall that, for any pure two-qudit state $|\psi
\rangle \langle \psi |$ on $\mathcal{H}_{d}\otimes \mathcal{H}_{d}$, the
non-zero eigenvalues $0<\mu _{k}(\psi )\leq 1$ of its reduced states
coincide and have the same multiplicity and vector $|\psi \rangle \in 
\mathcal{H}_{d}\otimes \mathcal{H}_{d}$ admits the Schmidt decomposition
\begin{equation}
|\psi \rangle  =\sum_{1\leq \text{ }n\text{ }\leq r_{sch}^{(\psi )}}\sqrt{%
\mu _{n}(\psi )\text{ }}|e_{n}^{(1)}\rangle \otimes |e_{n}^{(2)}\rangle ,
\sum_{1\leq n\leq r_{sch}^{(\psi )}}\mu _{n}(\psi )=1,
\label{01} \\
\end{equation}
where $\mu _{1}\geq \mu _{2}\geq ...\geq \mu _{r_{sch}^{(\psi )}}>0$ are nonzero eigenvalues of the reduced states of $\rho_{\psi}$, listed in the decreasing order and according to their multiplicity, and $
\lbrace|e_{k}^{(j)}\rangle \in \mathcal{H}\rbrace$ ,  $j=1,2,$ are sets of the corresponding mutually orthogonal unit  eigenvectors
of the reduced states. Parameters $\sqrt{\mu _{n}(\psi )%
\text{ }}$and $1\leq r_{sch}^{(\psi )}\leq d$ are called the Schmidt
coefficients and the Schmidt rank of $|\psi \rangle ,$ respectively. For simplicity of further calculations, we present the decomposition (\ref{01}) in the form 
\begin{equation}
|\psi \rangle  =\sum_{1\leq \text{ }n\text{ }\leq d }\sqrt{%
\mu _{n}(\psi )\text{ }}|e_{n}^{(1)}\rangle \otimes |e_{n}^{(2)}\rangle\label{01} \\
\end{equation}
by adding into the sum the zero eigenvalues $\mu _{n}$ of the reduces states if $n>r_{sch}^{(\psi )}$.

As it
is underlined in Theorem 1, the eigenvalues $\eta _{n}$ of the positive
operator $G(\rho ),$ given by (\ref{26}), are independent on a choice of
tuples $\Upsilon _{d_{1}}$ and $\Upsilon _{d_{2}}$ in representation (\ref{6__}). Therefore, in case of a pure two-qudit
state $\rho _{\psi }$, for finding in 
expression (\ref{25__}) the sum $\sum_{n=d_{2}}^{d_{2}^{2}-1}\eta _{n}$ of the
eigenvalues of $G(\rho _{\psi }),$ we take on each of
Hilbert spaces in $\mathcal{H}_{d}\otimes \mathcal{H}_{d}$ the tuple $%
\Upsilon _{d}$ of operators, which are similar by their structure to the
generalized Gell-Mann operators presented by relations (4)-(6) in \cite{Lou:20} but
expressed not via the elements of the standard basis in $\mathbb{C}^{d}$
but via the elements the corresponding orthonormal basis $\lbrace|e_{k}^{(j)}\rangle \in \mathcal{H}_{d}\rbrace$, 
$j=1,2$, in (\ref{01}). 

Under this choice,
by relations (\ref{8__}), (\ref{9__}), (\ref{10__}) and (\ref{01}) we find (quite similarly as it is done in Section 4 of \cite{Lou:20}) that the matrix representation of the operator $G(\rho_{\psi})$ is block-diagonal with the eigenvalues $\eta_{n}$ for $n\geq d$ equal to $\mu _{k}\mu _{m}$, $1\leq k<m\leq d$, each with multiplicity 2. Therefore, in (\ref{25__})
\begin{equation}
\sum_{n=d}^{d^{2}-1}\eta _{n}=2\sum_{1\leq k<m\leq d}\mu _{k}\mu _{m}.\label{00}\\\
\end{equation}
This and the relation  
\begin{equation}
\mathrm{C}_{\rho _{\psi }}^{2}=2\left( 1-\mathrm{tr[}\rho
_{j}^{2}]\right)= 4\sum_{1\leq k<m\leq d}\mu _{k}\mu _{m},
\end{equation} 
following from (\ref{013}) and (\ref{01}), prove the equality in (\ref{26_1}). 
The upper bound in (\ref{26_1}) follows\footnote{For the expression of the negativity of a pure state via its Schmidt coefficients, see footnote 3.} from the relation $C_{\rho _{\psi }}^{2}\leq 4\left(\sum_{1\leq k<m\leq d}\sqrt{\mu _{k}\mu _{m}}\right)^2=4%
\mathcal{N}^{2}_{\rho _{\psi }}
$, valid for any two-qudit state.

For a maximally entangled pure two-qudit state $\rho_{\psi_{max}}$, the concurrence is equal to $\frac{2(d-1)}{d}$ and by (\ref{26_1}) this implies (\ref{26__2}). The latter relation follows also directly from (\ref{25__}), since,  for a maximally entangled two-qudit state $\rho_{\psi_{max}}$,  the Bloch
vector $r_{2}=0$, the correlation matrix is diagonal \cite{Lou:20} with all its singular values equal to $\frac{2}{d}$. Therefore, 
\begin{equation}
G(\rho_{\psi_{max}})=\frac{1}{4}T_{\rho_{\psi_{max}}}^{\dagger}T_{\rho_{\psi_{max}}},\text{ \ \ \ 
}\eta_{n}=\frac{1}{d^{2}}, 
\end{equation}  
and, in view of (\ref{25__}), this implies 
\begin{equation}
\mathcal{D}_{g}(\rho_{\psi_{max}})=\sum_{n=d}^{d^{2}-1}\eta_{n}=\frac{d(d-1)}{d^{2}}%
=\frac{d-1}{d},   \label{26_5}\\
\end{equation}
i.e., expression (\ref{26__2}).
\end{proof}

\section{Upper and lower bounds}

In this Section, based on the new result of Theorem 1, we introduce the
general upper and lower bounds valid for an arbitrary two-qudit state, also, specify
the upper bound in case of a separable two-qudit state.

\begin{theorem}
For an arbitrary two-qudit state $\rho,$ pure or mixed, on $\mathcal{H}%
_{d_{1}}\otimes\mathcal{H}_{d_{2}},$ $d_{1},d_{2}\geq2,$ its geometric
quantum discord admits the following new upper bounds%
\begin{align}
\mathcal{D}_{g}(\rho) & \leq\frac{d_{2}-1}{d_{2}}\left[ 1-\frac{\left\Vert
r_{2}\right\Vert ^{2}}{d_{2}+1}\right]  \label{25_1} \\
& \leq\frac{d_{2}-1}{d_{2}} ,   \label{25-2}
\end{align}
where the upper bound (\ref{25-2}) constitutes the geometric discord of the
maximally entangled two-qudit state if $\min\{d_{1},d_{2}\}=d_{2}.$
\end{theorem}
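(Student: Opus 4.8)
The plan is to bound the expression $\mathcal{D}_{g}(\rho) = \mathrm{tr}[G(\rho)] - \sum_{n=1}^{d_2-1}\eta_n$ from Theorem 1, where $\eta_1 \geq \cdots \geq \eta_{d_2^2-1} \geq 0$ are the eigenvalues of the positive operator $G(\rho) = \frac{d_2-1}{d_1 d_2}|r_2\rangle\langle r_2| + \frac{1}{4}T_\rho^\dagger T_\rho$ on $\mathbb{R}^{d_2^2-1}$. Since $\mathcal{D}_{g}(\rho) = \sum_{n=d_2}^{d_2^2-1}\eta_n$ is the sum of the $(d_2^2-1)-(d_2-1) = d_2^2-d_2 = d_2(d_2-1)$ smallest eigenvalues of $G(\rho)$, a natural upper bound is obtained by replacing each of these $d_2(d_2-1)$ eigenvalues by their average, which is at most $\frac{1}{d_2(d_2-1)}\bigl(\mathrm{tr}[G(\rho)] - \sum_{n=1}^{d_2-1}\eta_n\bigr)$; but more directly, discarding the top eigenvalues only decreases the sum, so $\mathcal{D}_{g}(\rho) \leq \frac{d_2(d_2-1)}{d_2^2-1}\,\mathrm{tr}[G(\rho)] = \frac{d_2}{d_2+1}\,\mathrm{tr}[G(\rho)]$, using that the sum of the $m$ smallest of $N$ nonnegative numbers is at most $\frac{m}{N}$ times the total. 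Here $N = d_2^2-1$ and $m = d_2(d_2-1)$, and indeed $m/N = d_2/(d_2+1)$.

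Next I would compute $\mathrm{tr}[G(\rho)]$ explicitly. By linearity, $\mathrm{tr}[G(\rho)] = \frac{d_2-1}{d_1 d_2}\|r_2\|^2_{\mathbb{R}^{d_2^2-1}} + \frac{1}{4}\mathrm{tr}[T_\rho^\dagger T_\rho]$. To control $\mathrm{tr}[T_\rho^\dagger T_\rho] = \sum_{i,j}(T_\rho^{(ij)})^2$, I invoke the inequality (\ref{10.1}) from Section 2, which after rearranging gives $\frac{1}{4}\sum_{i,j}(T_\rho^{(ij)})^2 \leq \frac{d_1 d_2 - 1}{d_1 d_2} - \frac{d_1-1}{d_1 d_2}\|r_1\|^2 - \frac{d_2-1}{d_1 d_2}\|r_2\|^2$. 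Substituting this into $\mathrm{tr}[G(\rho)]$ the two $\|r_2\|^2$-terms cancel and I get $\mathrm{tr}[G(\rho)] \leq \frac{d_1 d_2-1}{d_1 d_2} - \frac{d_1-1}{d_1 d_2}\|r_1\|^2$. Dropping the (nonnegative) last term is wasteful though; instead I should keep track of $\|r_2\|$. The cleaner route: use $\frac14\mathrm{tr}[T_\rho^\dagger T_\rho] \le \frac{d_1d_2-1}{d_1d_2} - \frac{d_1-1}{d_1d_2}\|r_1\|^2 - \frac{d_2-1}{d_1d_2}\|r_2\|^2 \le \frac{d_1d_2-1}{d_1d_2} - \frac{d_2-1}{d_1d_2}\|r_2\|^2$, so $\mathrm{tr}[G(\rho)] \le \frac{d_1d_2-1}{d_1d_2}$, and then $\mathcal{D}_g(\rho) \le \frac{d_2}{d_2+1}\cdot\frac{d_1d_2-1}{d_1d_2} = \frac{d_1d_2-1}{d_1(d_2+1)}$. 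I should then check this equals $\frac{d_2-1}{d_2}\bigl[1 - \frac{\|r_2\|^2}{d_2+1}\bigr]$ only when the $r_1$-dependence is handled correctly — in fact to get the stated bound with the $\|r_2\|^2$ term, I must retain $\|r_2\|^2$ in $\mathrm{tr}[G(\rho)]$ and not bound it away.

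Let me redo this carefully: from (\ref{10.1}), $\mathrm{tr}[G(\rho)] = \frac{d_2-1}{d_1d_2}\|r_2\|^2 + \frac14\mathrm{tr}[T_\rho^\dagger T_\rho] \leq \frac{d_2-1}{d_1d_2}\|r_2\|^2 + \frac{d_1d_2-1}{d_1d_2} - \frac{d_1-1}{d_1d_2}\|r_1\|^2 - \frac{d_2-1}{d_1d_2}\|r_2\|^2 = \frac{d_1d_2-1}{d_1d_2} - \frac{d_1-1}{d_1d_2}\|r_1\|^2$. This has no $\|r_2\|$ dependence, so the sharpened bound (\ref{25_1}) must come from a tighter argument — namely, rather than replacing \emph{all} top $d_2-1$ eigenvalues by the average, one should use that $\eta_1$ is at least the largest eigenvalue of $\frac{d_2-1}{d_1d_2}|r_2\rangle\langle r_2|$ restricted appropriately, or use a Ky Fan / majorization argument that keeps the rank-one part separate. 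Concretely, since $G(\rho) \succeq \frac{d_2-1}{d_1d_2}|r_2\rangle\langle r_2|$, the top eigenvalue satisfies $\eta_1 \geq \langle \hat r_2|G(\rho)|\hat r_2\rangle \geq \frac{d_2-1}{d_1d_2}\|r_2\|^2$ where $\hat r_2 = r_2/\|r_2\|$; more usefully, one shows $\sum_{n=1}^{d_2-1}\eta_n \geq \frac{d_2-1}{d_2}\mathrm{tr}[G(\rho)] + (\text{correction involving }\|r_2\|^2)$. The main obstacle will be precisely this: extracting the extra factor $1 - \frac{\|r_2\|^2}{d_2+1}$ requires a quantitative lower bound on the sum of the largest $d_2-1$ eigenvalues of $G(\rho)$ that exploits the rank-one perturbation structure, presumably via interlacing of eigenvalues of $\frac14 T_\rho^\dagger T_\rho$ and its rank-one update. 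Once that is in hand, the second bound (\ref{25-2}) is immediate by dropping $\|r_2\|^2 \geq 0$, and the equality case follows from Corollary 3: for the maximally entangled state with $d_2 = \min\{d_1,d_2\}$ one has $r_2 = 0$ and $\mathcal{D}_g = \frac{d_2-1}{d_2}$, saturating (\ref{25-2}).
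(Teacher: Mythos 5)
Your first two steps reproduce the paper's proof exactly. The reduction $\mathcal{D}_{g}(\rho)\leq\frac{d_{2}}{d_{2}+1}\,\mathrm{tr}[G(\rho)]$ (which the paper phrases as $\sum_{n=d_{2}}^{d_{2}^{2}-1}\eta_{n}\leq d_{2}\sum_{n=1}^{d_{2}-1}\eta_{n}$, equivalent to your ``$m/N$'' averaging argument with $m/N=d_{2}/(d_{2}+1)$) and the estimate $\mathrm{tr}[G(\rho)]\leq\frac{d_{1}d_{2}-1}{d_{1}d_{2}}-\frac{d_{1}-1}{d_{1}d_{2}}\left\Vert r_{1}\right\Vert^{2}$ obtained from (\ref{10.1}) are both correct and are precisely relations (\ref{25-3})--(\ref{25_4}) of the paper.

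The gap is in the final step. You correctly notice that the intermediate bound carries a $\left\Vert r_{1}\right\Vert^{2}$-dependence rather than the $\left\Vert r_{2}\right\Vert^{2}$-dependence of (\ref{25_1}), but you then conjecture that closing this requires a Ky Fan/interlacing lower bound on $\sum_{n=1}^{d_{2}-1}\eta_{n}$ exploiting the rank-one structure of $G(\rho)$ --- an argument you do not carry out, and which is not what is needed. The missing ingredient is Eq. (\ref{07_}), the equality of the purities of the two reduced states, which gives $\frac{d_{1}-1}{d_{1}}\left\Vert r_{1}\right\Vert^{2}=\frac{1}{d_{2}}-\frac{1}{d_{1}}+\frac{d_{2}-1}{d_{2}}\left\Vert r_{2}\right\Vert^{2}$. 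Substituting this into your own bound $\mathcal{D}_{g}(\rho)\leq\frac{d_{1}d_{2}-1}{d_{1}(d_{2}+1)}-\frac{1}{d_{2}+1}\cdot\frac{d_{1}-1}{d_{1}}\left\Vert r_{1}\right\Vert^{2}$ yields $\frac{d_{2}-1}{d_{2}}\left[1-\frac{\left\Vert r_{2}\right\Vert^{2}}{d_{2}+1}\right]$ after a one-line simplification; no spectral refinement of the eigenvalue count is involved. (You could legitimately have flagged that (\ref{07_}) is derived in Section 2 from the Schmidt theorem, i.e.\ for pure states, whereas the theorem is stated for arbitrary $\rho$; the paper's proof invokes it without comment, and for mixed states with $\mathrm{tr}[\rho_{1}^{2}]<\mathrm{tr}[\rho_{2}^{2}]$ the substitution does not go in the right direction. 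But as a reconstruction of the paper's argument, the purity identity is the step you are missing.) Your handling of (\ref{25-2}) by dropping $\left\Vert r_{2}\right\Vert^{2}\geq0$ and of the saturation on a maximally entangled state via Corollary 2 is fine.
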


\begin{proof}
Taking into account that%
\begin{equation}
\mathrm{tr}[G(\rho)]=\sum_{n=1}^{d_{2}^{2}-1}\eta_{n}=\frac{d_{2}-1}{%
d_{2}d_{1}}\left\Vert r_{2}\right\Vert _{\mathbb{R}^{d_{2}^{2}-1}}^{2}+\frac{%
1}{4}\mathrm{tr}[T_{\rho}^{\dagger}T_{\rho}].   \label{25-3}
\end{equation}
and relation (\ref{10.1}), we have%
\begin{equation}
\sum_{n=1}^{d_{2}^{2}-1}\eta_{n}\leq\frac{d_{1}d_{2}-1}{d_{1}d_{2}}-\frac{%
d_{1}-1}{d_{1}d_{2}}\left\Vert r_{1}\right\Vert _{\mathbb{R}%
^{d_{1}^{2}-1}}^{2}.   \label{25-4}
\end{equation}
We further note that since $\sum_{n=1}^{d_{2}^{2}-1}\eta_{n}=%
\sum_{n=1}^{d_{2}-1}\eta_{n}+$ $\sum_{n=d_{2}}^{d_{2}^{2}-1}\eta_{n}$ and $%
\sum _{n=d_{2}}^{d_{2}^{2}-1}\eta_{n}\leq d_{2}\sum_{n=1}^{d_{2}-1}\eta_{n}$%
, relation (\ref{25-4}) implies 
\begin{align}
\sum_{n=d_{2}}^{d_{2}^{2}-1}\eta_{n} & \leq\frac{1}{d_{2}+1}\left\{ \frac{%
d_{1}d_{2}-1}{d_{1}}-\frac{d_{1}-1}{d_{1}}\left\Vert r_{1}\right\Vert _{%
\mathbb{R}^{d_{1}^{2}-1}}^{2}\right\}  \label{25_4} \\
& =\frac{d_{1}d_{2}-1}{d_{1}(d_{2}+1)}-\frac{1}{d_{2}+1}\cdot\frac{d_{1}-1}{%
d_{1}}\left\Vert r_{1}\right\Vert ^{2}.  \notag
\end{align}
By using in (\ref{25_4}) equality (\ref{07_}), we derive%
\begin{align}
\sum_{n=d_{2}}^{d_{2}^{2}-1}\eta_{n} & \leq\frac{d_{1}d_{2}-1}{d_{1}(d_{2}+1)%
}-\frac{1}{d_{2}+1}\left( \frac{1}{d_{2}}-\frac{1}{d_{1}}+\frac{d_{2}-1}{%
d_{2}}\left\Vert r_{2}\right\Vert ^{2}\right)  \label{25_5} \\
& =\frac{d_{2}-1}{d_{2}}\left[ 1-\frac{\left\Vert r_{2}\right\Vert ^{2}}{%
d_{2}+1}\right] .  \notag
\end{align}
The latter also implies the upper bound (\ref{25-2}) and proves the
statement.
\end{proof}

Furthermore, based on the exact relation (\ref{25__}) we find the following
general upper and lower bounds on the geometric quantum discord of a
two-qudit state.

\begin{theorem}
For a two-qudit state $\rho$ on $\mathcal{H}_{d_{1}}\otimes\mathcal{H}%
_{d_{2} },$ $d_{1},d_{2}\geq2,$ the geometric quantum discord admits the
following new bounds: {\small 
\begin{align}
& \frac{1}{4}\mathrm{tr}[T_{\rho}^{\dagger}T_{\rho}]-\frac{1}{4}%
\sum_{n=1}^{d_{2}-1}\lambda_{n}  \label{27} \\
& \leq\mathcal{D}_{g}(\rho)  \notag \\
& \leq\min\left\{ \frac{1}{4}\mathrm{tr}[T_{\rho}^{\dagger}T_{\rho}];\text{ }%
\frac{1}{4}\mathrm{tr}[T_{\rho}^{\dagger}T_{\rho}]+\frac{d_{2}-1}{d_{2}d_{1}}%
\left\Vert r_{2}\right\Vert _{\mathbb{R}^{d_{2}^{2}-1}}^{2}-\frac{1}{4}%
\sum_{n=1}^{d_{2}-1}\lambda_{n}\right\} ,  \notag
\end{align}
}{\normalsize where $\lambda_{1}\geq\lambda_{2}\geq\cdots\lambda_{d^{2}-1}%
\geq0$ are eigenvalues of the positive Hermitian operator $%
T_{\rho}^{\dagger}T_{\rho}$ on $\mathbb{R}^{d_{2}^{2}-1}.$ }
\end{theorem}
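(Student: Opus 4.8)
The plan is to derive all three inequalities in \eqref{27} directly from the exact formula \eqref{25__}, namely $\mathcal{D}_{g}(\rho)=\operatorname{tr}[G(\rho)]-\sum_{n=1}^{d_{2}-1}\eta_{n}$ together with the expression \eqref{26} for $G(\rho)=\tfrac{d_{2}-1}{d_{1}d_{2}}|r_{2}\rangle\langle r_{2}|+\tfrac14 T_{\rho}^{\dagger}T_{\rho}$ and the trace identity \eqref{25-3}. The two operators in play are $G(\rho)$, with ordered eigenvalues $\eta_{n}$, and $\tfrac14 T_{\rho}^{\dagger}T_{\rho}$, with ordered eigenvalues $\tfrac14\lambda_{n}$; the whole argument is an exercise in comparing their partial eigenvalue sums, so Weyl-type monotonicity for sums of eigenvalues under addition of a positive rank-one perturbation is the tool I would invoke.

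First I would establish the \textbf{lower bound}. Since $G(\rho)=\tfrac14 T_{\rho}^{\dagger}T_{\rho}+P$ with $P=\tfrac{d_{2}-1}{d_{1}d_{2}}|r_{2}\rangle\langle r_{2}|\succeq0$ of rank at most one, the min-max characterization gives $\eta_{n}\le\tfrac14\lambda_{n-1}$ for $n\ge2$ and more crudely $\sum_{n=1}^{d_{2}-1}\eta_{n}\le\operatorname{tr}P+\tfrac14\sum_{n=1}^{d_{2}-1}\lambda_{n}$ — but the clean route is simply to subtract: $\mathcal{D}_{g}(\rho)=\operatorname{tr}[G(\rho)]-\sum_{n=1}^{d_{2}-1}\eta_{n}=\tfrac{d_{2}-1}{d_{1}d_{2}}\|r_{2}\|^{2}+\tfrac14\operatorname{tr}[T_{\rho}^{\dagger}T_{\rho}]-\sum_{n=1}^{d_{2}-1}\eta_{n}$ by \eqref{25-3}, and then bound $\sum_{n=1}^{d_{2}-1}\eta_{n}$ from above. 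By the variational principle, $\sum_{n=1}^{d_{2}-1}\eta_{n}=\max_{\dim V=d_{2}-1}\operatorname{tr}[G(\rho)|_{V}]\le\max_{\dim V=d_{2}-1}\operatorname{tr}[\tfrac14 T_{\rho}^{\dagger}T_{\rho}|_{V}]+\operatorname{tr}[P]=\tfrac14\sum_{n=1}^{d_{2}-1}\lambda_{n}+\tfrac{d_{2}-1}{d_{1}d_{2}}\|r_{2}\|^{2}$, because $P$ is positive of trace $\tfrac{d_{2}-1}{d_{1}d_{2}}\|r_{2}\|^{2}$ and hence $\operatorname{tr}[P|_{V}]\le\operatorname{tr}[P]$. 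Substituting this cancels the $\|r_{2}\|^{2}$ term and yields $\mathcal{D}_{g}(\rho)\ge\tfrac14\operatorname{tr}[T_{\rho}^{\dagger}T_{\rho}]-\tfrac14\sum_{n=1}^{d_{2}-1}\lambda_{n}$, which is \eqref{27}'s left side (here $\operatorname{tr}[T_{\rho}^{t}T_{\rho}]=\operatorname{tr}[T_{\rho}^{\dagger}T_{\rho}]$ since $T_{\rho}$ is real).

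Next the \textbf{upper bound}. The second candidate, $\tfrac14\operatorname{tr}[T_{\rho}^{\dagger}T_{\rho}]+\tfrac{d_{2}-1}{d_{1}d_{2}}\|r_{2}\|^{2}-\tfrac14\sum_{n=1}^{d_{2}-1}\lambda_{n}=\operatorname{tr}[G(\rho)]-\tfrac14\sum_{n=1}^{d_{2}-1}\lambda_{n}$, dominates $\mathcal{D}_{g}(\rho)=\operatorname{tr}[G(\rho)]-\sum_{n=1}^{d_{2}-1}\eta_{n}$ precisely when $\sum_{n=1}^{d_{2}-1}\eta_{n}\ge\tfrac14\sum_{n=1}^{d_{2}-1}\lambda_{n}$, which is immediate from Weyl's inequality for partial sums: adding the positive operator $P$ can only increase each partial sum of eigenvalues, i.e. $\sum_{n=1}^{m}\eta_{n}\ge\tfrac14\sum_{n=1}^{m}\lambda_{n}$ for every $m$. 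For the first candidate, $\mathcal{D}_{g}(\rho)\le\tfrac14\operatorname{tr}[T_{\rho}^{\dagger}T_{\rho}]$, I would write $\mathcal{D}_{g}(\rho)=\operatorname{tr}[G(\rho)]-\sum_{n=1}^{d_{2}-1}\eta_{n}=\sum_{n=d_{2}}^{d_{2}^{2}-1}\eta_{n}$ and use the same partial-sum Weyl inequality in the complementary form $\sum_{n=d_{2}}^{d_{2}^{2}-1}\eta_{n}\le\operatorname{tr}[G(\rho)]-\tfrac14\sum_{n=1}^{d_{2}-1}\lambda_{n}$ combined with $\operatorname{tr}[G(\rho)]=\tfrac14\operatorname{tr}[T_{\rho}^{\dagger}T_{\rho}]+\tfrac{d_{2}-1}{d_{1}d_{2}}\|r_{2}\|^{2}$; alternatively, and more simply, note that the tail sum $\sum_{n=d_{2}}^{d_{2}^{2}-1}\eta_{n}$ of eigenvalues of $G=\tfrac14 T_{\rho}^{\dagger}T_{\rho}+P$ with $P$ of rank $\le1$ is bounded by the corresponding tail sum $\sum_{n=d_{2}-1}^{d_{2}^{2}-2}\tfrac14\lambda_{n}\le\tfrac14\operatorname{tr}[T_{\rho}^{\dagger}T_{\rho}]$, since a rank-one positive perturbation shifts the tail index by at most one and all $\lambda_n\ge0$. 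Taking the minimum of the two upper estimates gives the right-hand side of \eqref{27}.

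The only real subtlety — the \textbf{main obstacle} — is making the eigenvalue-interlacing/partial-sum comparisons between $G(\rho)$ and $\tfrac14 T_{\rho}^{\dagger}T_{\rho}$ fully rigorous when $P$ has rank one rather than zero, and checking that the index bookkeeping in the tail sums ($n$ running from $d_{2}$ to $d_{2}^{2}-1$ versus from $d_{2}-1$) is correct; this is exactly where Weyl's monotonicity theorem and the Cauchy interlacing theorem must be cited precisely rather than waved at. Everything else — the trace identity \eqref{25-3}, the reality of $T_{\rho}$, and the positivity $P\succeq0$ with $\operatorname{tr}P=\tfrac{d_{2}-1}{d_{1}d_{2}}\|r_{2}\|^{2}_{\mathbb{R}^{d_{2}^{2}-1}}$ — is already available from Theorem 1 and Section 2, so the proof is short once the eigenvalue-comparison lemma is pinned down.
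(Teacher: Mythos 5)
Your proof is correct, and in substance it is the same as the paper's: the paper works from the form $\mathcal{D}_{g}(\rho)=\mathrm{tr}[G(\rho)]-\max_{\Omega_{y}}\mathrm{tr}[G(\rho)\Pi_{\Omega_{y}}]$ of Proposition~1 and uses the elementary sub/superadditivity of the maximum, $\max_{\Omega_y}\mathrm{tr}[(P+\tfrac14 T_\rho^{\dagger}T_\rho)\Pi_{\Omega_y}]\leq\max_{\Omega_y}\mathrm{tr}[P\Pi_{\Omega_y}]+\max_{\Omega_y}\mathrm{tr}[\tfrac14 T_\rho^{\dagger}T_\rho\Pi_{\Omega_y}]$ and $\max_{\Omega_y}\mathrm{tr}[(P+\tfrac14 T_\rho^{\dagger}T_\rho)\Pi_{\Omega_y}]\geq\max_{\Omega_y}\mathrm{tr}[f_j\Pi_{\Omega_y}]$ for each nonnegative summand, which via Proposition~2 is exactly your Ky Fan partial-sum comparison between $G(\rho)$ and $\tfrac14 T_\rho^{\dagger}T_\rho$; so your lower bound and your second upper bound reproduce the paper's argument almost verbatim. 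The one place you genuinely diverge is the bound $\mathcal{D}_{g}(\rho)\leq\tfrac14\mathrm{tr}[T_\rho^{\dagger}T_\rho]$: note that your first suggested route for it (the ``complementary form'' $\sum_{n=d_2}^{d_2^2-1}\eta_n\leq\mathrm{tr}[G(\rho)]-\tfrac14\sum_{n=1}^{d_2-1}\lambda_n$) only re-derives the \emph{second} candidate in the minimum, not the first, so on its own it would leave a gap; your alternative via rank-one Weyl interlacing, $\eta_{n}\leq\tfrac14\lambda_{n-1}$ for $n\geq2$ and hence $\sum_{n=d_2}^{d_2^2-1}\eta_n\leq\tfrac14\sum_{n=d_2-1}^{d_2^2-2}\lambda_n\leq\tfrac14\mathrm{tr}[T_\rho^{\dagger}T_\rho]$, does close it and the index bookkeeping is right. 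The paper gets the same bound more cheaply by dropping $\tfrac14 T_\rho^{\dagger}T_\rho$ rather than $P$ inside the maximum: $\max_{\Omega_y}\mathrm{tr}[G(\rho)\Pi_{\Omega_y}]\geq\max_{\Omega_y}\mathrm{tr}[P\Pi_{\Omega_y}]=\tfrac{d_2-1}{d_1d_2}\left\Vert r_2\right\Vert^2=\mathrm{tr}[P]$ since $P$ has rank one, which subtracts off the $\left\Vert r_2\right\Vert^2$ term directly; you may want to adopt that one-liner in place of the interlacing argument.
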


\begin{proof}
For the evaluation of the last term in \eqref{17__}, we note that $%
\max_{x}\left\{ f_{1}(x)+f_{2}(x)\right\} \leq
\max_{x}f_{1}(x)+\max_{x}f_{2}(x),$ and, if $f_{j}(x)\geq 0,$ $j=1,2,$ then $%
\max_{x}\left\{ f_{1}(x)+f_{2}(x)\right\} \geq \max_{x}f_{j}(x),$ $j=1,2.$
These relations and Propositions 2 and 4 imply 
\begin{align}
& \max_{\Omega _{\Upsilon _{d_{2}}}}\mathrm{tr}\left[ \left( \frac{d_{2}-1}{%
d_{1}d_{2}}|r_{2}\rangle \langle r_{2}|+\frac{1}{4}T_{\rho }^{\dagger
}T_{\rho }\right) \Pi _{\Omega _{\Upsilon _{d_{2}}}}\right]   \label{27-1} \\
& \leq \max_{\Omega _{\Upsilon _{d_{2}}}}\mathrm{tr}\left[ \frac{d_{2}-1}{%
d_{1}d_{2}}|r_{2}\rangle \langle r_{2}|\Pi _{\Omega _{\Upsilon _{d_{2}}}}%
\right] +\max_{\Omega _{\Upsilon _{d_{2}}}}\mathrm{tr}\left[ \frac{1}{4}%
T_{\rho }^{\dagger }T_{\rho }\Pi _{\Omega _{\Upsilon _{d_{2}}}}\right]  
\notag \\
& =\frac{d_{2}-1}{d_{1}d_{2}}\left\Vert r_{2}\right\Vert _{\mathbb{R}%
^{d_{2}^{2}-1}}^{2}+\frac{1}{4}\sum_{n=1}^{d_{2}-1}\lambda _{n},  \notag
\end{align}%
as well as 
\begin{align}
& \max_{\Omega _{\Upsilon _{d_{2}}}}\mathrm{tr}\left[ \left( \frac{d_{2}-1}{%
d_{1}d_{2}}|r_{2}\rangle \langle r_{2}|+\frac{1}{4}T_{\rho }^{\dagger
}T_{\rho }\right) \Pi _{\Omega _{\Upsilon _{d_{2}}}}\right]   \label{27-2} \\
& \geq \max_{\Omega _{\Upsilon _{d_{2}}}}\mathrm{tr}\left[ \left( \frac{%
d_{2}-1}{d_{1}d_{2}}|r_{2}\rangle \langle r_{2}|\right) \Pi _{\Omega
_{\Upsilon _{d_{2}}}}\right] =\frac{d_{2}-1}{d_{1}d_{2}}\left\Vert
r_{2}\right\Vert _{\mathbb{R}^{d_{2}^{2}-1}}^{2}  \notag
\end{align}%
and 
\begin{align}
& \max_{\Omega_{\Upsilon_{d_{2}}}}\mathrm{tr}\left[ \left( \frac{d_{2}-1}{d_{1}d_{2}}%
|r_{2}\rangle \langle r_{2}|+\frac{1}{4}T_{\rho }^{\dagger }T_{\rho }\right)
\Pi _{\Omega_{\Upsilon_{d_{2}}}}\right]   \label{27-3} \\
& \geq \max_{\Omega_{\Upsilon_{d_{2}}}}\mathrm{tr}\left[ \frac{1}{4}T_{\rho }^{\dagger
}T_{\rho }\Pi _{\Omega_{\Upsilon_{d_{2}}}}\right] =\frac{1}{4}\sum_{n=1}^{d_{2}-1}\lambda
_{n}\ .  \notag
\end{align}%
Relations (\ref{27-1}), (\ref{27-2}) and (\ref{27-3}) prove the statement.
\end{proof}

From relation (\ref{10.1}) and the upper bound in (\ref{27}) it follows
that, for a two qudit state $\rho$ on $\mathcal{H}_{d_{1}}\otimes\mathcal{H}%
_{d_{2}},$ $d_{1},d_{2}\geq2,$%
\begin{equation}
\mathcal{D}_{g}(\rho)\leq\min\{J_{1},J_{2}\},   \label{28}
\end{equation}
where 
\begin{equation}
J_{1}=\frac{d_{1}d_{2}-1}{d_{2}d_{1}}-\frac{d_{1}-1}{d_{2}d_{1}}\left\Vert
r_{1}\right\Vert _{\mathbb{R}^{d_{1}^{2}-1}}^{2}-\frac{d_{2}-1}{d_{2}d_{1}}%
\left\Vert r_{2}\right\Vert _{\mathbb{R}^{d_{2}^{2}-1}}^{2}   \label{28-1}
\end{equation}
and 
\begin{equation}
J_{2}=\frac{d_{1}d_{2}-1}{d_{2}d_{1}}-\frac{d_{1}-1}{d_{2}d_{1}}\left\Vert
r_{1}\right\Vert _{\mathbb{R}^{d_{1}^{2}-1}}^{2}-\frac{1}{4}%
\sum_{n=1}^{d_{2}-1}\lambda_{n}.   \label{28-2}
\end{equation}
The new upper bounds (\ref{25_1}), (\ref{25-2}) in Theorem 2 and the new
upper bounds (\ref{27}), (\ref{28})-(\ref{28-2}) considerably improve the
upper bound in Proposition 3.1 of \cite{Bag.Dey.Osa:19} having in our
notations the form $\frac{d_{1}d_{2}-1}{d_{2}d_{1}}$.

Consider also the upper bound on the geometric quantum discord in a general
separable case.

\begin{proposition}
For every separable two-qudit state 
\begin{equation}
\rho _{sep}=\sum_{k}\beta _{k}\rho _{1}^{(k)}\otimes \rho _{2}^{(k)},\text{
\ \ }\beta _{k}\mathcal{\ \geq }0,\text{ \ \ }\sum_{k}\beta _{k}=1,
\label{30-0}
\end{equation}%
on $\mathcal{H}_{d}\otimes \mathcal{H}_{d},$ $d\geq 2,$ 
\begin{equation}
\mathcal{D}_{g}(\rho _{sep})\leq \left( \frac{d-1}{d}\right) ^{2}.
\label{30}
\end{equation}%
\end{proposition}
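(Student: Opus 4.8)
The plan is to bound $\mathcal{D}_g(\rho_{sep})$ by invoking the general upper bound $J_1$ in (\ref{28-1}), since for a separable state all the ingredients of $J_1$ — the Bloch vectors $r_1,r_2$ and the trace of $T_\rho^\dagger T_\rho$ — are controlled by the data in (\ref{11-3})–(\ref{11-4}). Concretely, starting from (\ref{28}) we have $\mathcal{D}_g(\rho_{sep})\le J_1=\tfrac{d^2-1}{d^2}-\tfrac{d-1}{d^2}\|r_1\|^2-\tfrac{d-1}{d^2}\|r_2\|^2$. The task is to show this quantity never exceeds $\left(\tfrac{d-1}{d}\right)^2=\tfrac{(d-1)^2}{d^2}$, i.e. that $\tfrac{d-1}{d^2}\bigl(\|r_1\|^2+\|r_2\|^2\bigr)\ge\tfrac{d^2-1}{d^2}-\tfrac{(d-1)^2}{d^2}=\tfrac{2(d-1)}{d^2}$, equivalently $\|r_1\|_{\mathbb{R}^{d^2-1}}^2+\|r_2\|_{\mathbb{R}^{d^2-1}}^2\ge 2$. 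Wait — that cannot hold in general (take $\rho_{sep}=\tfrac{\mathbb I}{d}\otimes\tfrac{\mathbb I}{d}$, where both norms vanish). So the bound $J_1$ alone is too weak, and instead I would return to the exact formula (\ref{25__}), or better use $J_2$ together with a genuine lower bound on $\sum_{n=1}^{d-1}\lambda_n$ coming from separability.

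The correct route, I expect, is to exploit (\ref{11-2}): for a separable two-qudit state the operator norm $\|T_{\rho_{sep}}\|_0\le\tfrac{2(d-1)}{d}$, hence every eigenvalue $\lambda_n$ of $T_\rho^\dagger T_\rho$ satisfies $\lambda_n\le\left(\tfrac{2(d-1)}{d}\right)^2$, and in particular $\lambda_1\le\tfrac{4(d-1)^2}{d^2}$. Now apply (\ref{25__}) in the form $\mathcal{D}_g(\rho_{sep})=\mathrm{tr}[G(\rho)]-\sum_{n=1}^{d-1}\eta_n\le \mathrm{tr}[G(\rho)]-\eta_1\cdot 1$ — but more to the point, I would use that $\mathcal{D}_g(\rho)=\sum_{n=d}^{d^2-1}\eta_n\le (d^2-d)\,\eta_d\le(d^2-d)\,\eta_1$, while separately $\eta_1=\|G(\rho)\|_0$. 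So it suffices to bound $\|G(\rho_{sep})\|_0$. From (\ref{26}) and the triangle inequality for the operator norm, $\|G(\rho_{sep})\|_0\le\tfrac{d-1}{d^2}\|r_2\|^2+\tfrac14\|T_\rho^\dagger T_\rho\|_0\le\tfrac{d-1}{d^2}+\tfrac14\cdot\tfrac{4(d-1)^2}{d^2}$, which is already of order $(d-1)^2/d^2$; but multiplying by $(d^2-d)$ overshoots, so the crude "$\sum_{n=d}^{d^2-1}\eta_n\le (d^2-d)\eta_1$" is also too lossy.

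The sharp argument therefore must use both the \emph{sum} bound $\mathrm{tr}[G(\rho)]=\tfrac{d-1}{d^2}\|r_2\|^2+\tfrac14\mathrm{tr}[T_\rho^\dagger T_\rho]$ \emph{and} an operator-norm lower bound on the discarded part $\sum_{n=1}^{d-1}\eta_n$. Since $G(\rho)$ has $d^2-1$ eigenvalues and $\mathcal{D}_g=\sum_{n=d}^{d^2-1}\eta_n$ omits the $d-1$ largest, we have $\mathcal{D}_g(\rho)=\mathrm{tr}[G(\rho)]-\sum_{n=1}^{d-1}\eta_n\le\mathrm{tr}[G(\rho)]-\tfrac{d-1}{d^2-1}\,\mathrm{tr}[G(\rho)]=\tfrac{d^2-d}{d^2-1}\mathrm{tr}[G(\rho)]=\tfrac{d}{d+1}\mathrm{tr}[G(\rho)]$, using $\sum_{n=1}^{d-1}\eta_n\ge\tfrac{d-1}{d^2-1}\sum_{n=1}^{d^2-1}\eta_n$ (the $d-1$ largest eigenvalues carry at least their proportional share). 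Then I would bound $\mathrm{tr}[G(\rho_{sep})]$ from above. Using (\ref{11-3}), $\mathrm{tr}[T_{\rho_{sep}}^\dagger T_{\rho_{sep}}]=\tfrac{4(d-1)^2}{d^2}\mathrm{tr}\bigl[\sum_{k,l}\beta_k\beta_l\,|r_2^{(k)}\rangle\langle r_1^{(k)}|r_1^{(l)}\rangle\langle r_2^{(l)}|\bigr]\le\tfrac{4(d-1)^2}{d^2}\bigl(\sum_k\beta_k\|r_1^{(k)}\|\,\|r_2^{(k)}\|\bigr)^2\le\tfrac{4(d-1)^2}{d^2}$ by Cauchy–Schwarz and $\|r_j^{(k)}\|\le1$; combined with $\tfrac{d-1}{d^2}\|r_2\|^2\le\tfrac{d-1}{d^2}$, this gives $\mathrm{tr}[G(\rho_{sep})]\le\tfrac{d-1}{d^2}+\tfrac{(d-1)^2}{d^2}=\tfrac{(d-1)d}{d^2}=\tfrac{d-1}{d}$, whence $\mathcal{D}_g(\rho_{sep})\le\tfrac{d}{d+1}\cdot\tfrac{d-1}{d}=\tfrac{d-1}{d+1}$ — still not $\left(\tfrac{d-1}{d}\right)^2$. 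The main obstacle, clearly, is that these term-by-term norm estimates are not tight enough; the genuine proof will need to track the \emph{correlation} in (\ref{11-3}) more carefully — e.g. noting $r_2=\sum_k\beta_k r_2^{(k)}$ so that $\|r_2\|^2$ and $\mathrm{tr}[T_{\rho_{sep}}^\dagger T_{\rho_{sep}}]$ cannot both be near their maxima — and combine this with the projection estimate of Proposition 2 applied directly to $G(\rho_{sep})$ rather than to its trace and top eigenvalue separately. The final claim for pure separable states is immediate: a pure product state $|\phi_1\rangle\langle\phi_1|\otimes|\phi_2\rangle\langle\phi_2|$ is already a quantum-classical state of the form (\ref{11__}) (take $\sigma_k=|\phi_1\rangle\langle\phi_1|$, one $\alpha_k=1$), so it lies in $\Omega$ and $\mathcal{D}_g=0$.
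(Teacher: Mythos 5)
Your proof of the second claim (vanishing discord for a pure separable state) is correct and in fact more elementary than the paper's: you observe that a pure product state is itself a quantum--classical state of the form (\ref{11__}), hence lies in $\Omega$ and has zero distance to it, whereas the paper instead computes the eigenvalues of $G(\rho_{sep}^{(pure)})$ and applies (\ref{25__}). Both are valid.

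For the main inequality (\ref{30}), however, your argument has a genuine gap: you explicitly concede that none of your estimates reaches $\left(\tfrac{d-1}{d}\right)^{2}$ and end with a heuristic about ``tracking the correlation more carefully.'' The irony is that you already derived the decisive estimate in passing: your Cauchy--Schwarz computation from (\ref{11-3}) gives precisely
\begin{equation*}
\frac{1}{4}\mathrm{tr}[T_{\rho_{sep}}^{\dagger}T_{\rho_{sep}}]
=\left(\frac{d-1}{d}\right)^{2}\sum_{k,k_{1}}\beta_{k}\beta_{k_{1}}\,(r_{1}^{(k)}\cdot r_{1}^{(k_{1})})(r_{2}^{(k)}\cdot r_{2}^{(k_{1})})
\leq\left(\frac{d-1}{d}\right)^{2},
\end{equation*}
which is exactly the paper's Eq.\ (\ref{30.1}). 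What you missed is that Theorem 3 already supplies the companion inequality $\mathcal{D}_{g}(\rho)\leq\frac{1}{4}\mathrm{tr}[T_{\rho}^{\dagger}T_{\rho}]$ --- the \emph{first} entry of the minimum in (\ref{27}), coming from the lower bound (\ref{27-2}) $\max_{\Omega_{y}}\mathrm{tr}[G(\rho)\Pi_{\Omega_{y}}]\geq\frac{d_{2}-1}{d_{1}d_{2}}\|r_{2}\|^{2}$ --- with no extra $\|r_{2}\|^{2}$ term and no proportional-share estimate needed. Chaining these two inequalities finishes the proof. Your substitute bounds ($J_{1}$, the top-eigenvalue count, and $\mathcal{D}_{g}\leq\frac{d}{d+1}\mathrm{tr}[G]$) are all strictly weaker and, as you correctly diagnose, cannot close the argument; so as written the first assertion of the proposition remains unproved.
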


\begin{proof}
In view of expression (\ref{11-3}) for the correlation matrix $T_{\rho_{sep}}
$ of separable state $\rho_{sep},$ we have in case $d_{1}=d_{2}:$ 
\begin{align}
&\frac{1}{4}\mathrm{tr}[T_{\rho_{sep}}^{\dagger}T_{\rho_{sep}}]  \notag \\
&=\left( \frac{d-1}{d}\right) ^{2}\
\sum_{k,k_{1}}\beta_{k}\beta_{k_{1}}(r_{1}^{(k)}\cdot
r_{1}^{(k_{1})})(r_{2}^{(k)}\cdot r_{2}^{(k_{1})})  \label{30.1} \\
& \leq\left( \frac{d-1}{d}\right) ^{2},  \notag
\end{align}
where $r_{j}^{(k)}$ are the Bloch vectors of states $\rho_{j}^{(k)},$ $j=1,2,
$ given by (\ref{2_1}) with norms $||r_{1}^{(k)}||,$ $||r_{2}^{(k)}||$ $%
\leq1.$ 
\end{proof}

From the upper bound (\ref{30}) it follows that, for any separable two-qubit
state, the geometric quantum discord cannot exceed $1/4.$

If a separable state is pure, then $||r_{1}||=$ $||r_{2}||$ $=1$ and by (\ref{11-3}) the positive Hermitian operator (\ref{26}) takes the form%
\begin{align}
& G(\rho_{sep}^{(pure)}) =\frac{d-1}{d^{2}}|r_{2}\rangle\langle r_{2}|\text{ 
}+\text{ }\left( \frac{d-1}{d}\right) ^{2}\ \left\Vert r_{1}\right\Vert
^{2}|r_{2}\rangle\langle r_{2}|  \label{30.2} \\
& =\left( \frac{d-1}{d^{2}}+\left( \frac{d-1}{d}\right) ^{2}\right) \text{ }%
|r_{2}\rangle\langle r_{2}|=\frac{d-1}{d}|r_{2}\rangle\langle r_{2}|  \notag
\end{align}
and has only one nonzero eigenvalue 
\begin{equation}
\eta_{1}=\frac{d-1}{d}=\mathrm{tr}\left[ G(\rho_{sep}^{(pure)})\right] 
\label{30.3}
\end{equation}
with multiplicity one. Therefore, for a pure separable state, relation (\ref{25__}) gives $\mathcal{D}_{g}(\rho
_{sep}^{(pure)})=0$ 
-- as it should be since a pure separable state is quantum-classical. This is also consistent with (\ref{26_1}) since for a pure separable state the concurrence is equal to zero.

\section{Conclusion}
In the present article, for an arbitrary two-qudit state with any dimensions
at two sites, we find (Theorem 1) in the explicit analytical form the exact
value of its geometric quantum discord. This new rigorously proved general
result indicates that the lower bound on the geometric quantum discord found in \cite{Ran.Par:12} constitutes its exact value for each two-qudit state and includes only as particular cases the exact
results for a general two-qubit state \cite{Dak.Ved.Bru:10}, a general
qubit-qudit state \cite{Ran.Par:12} and some special families of two-qudit
states \cite{Ran.Par:12,Luo.Fu:10}.

Based on this new general result (\ref{25__}) of Theorem 1, we:

\begin{itemize}
\item[(a)] show (Corollary 1) that, for every pure two-qudit state, the exact
value of the geometric quantum discord is equal to one half of its squared
concurrence;

\item[(b)] find (Theorem 2) new general upper bounds (\ref{25_1}) and (\ref%
{25-2}) on the geometric quantum discord of an arbitrary two-qudit quantum
state of any dimension which are consistent with the exact value in
Corollary 1  for the geometric quantum discord of a maximally entangled pure
two-qudit state;

\item[(c)] derive (Theorem 3) for an arbitrary two-qudit state the new
general upper and lower bounds on the geometric quantum discord expressed
via the eigenvalues of its correlation matrix. These upper bounds are
tighter than the ones in \cite{Bag.Dey.Osa:19};

\item[(d)] specify (Proposition 3)the new upper bound on the geometric quantum discord of
an arbitrary separable two-qudit state of any dimension.
\end{itemize}

The new general results derived in the present article considerably extend
the range of known results on properties of the geometric quantum discord of
a two-qudit state, pure or mixed, of an arbitrary dimension.

As shown in \cite{Maz.etal:09, Maz.Pii.Man:10}, there exist bipartite quantum states which, under evolution via some quantum channels, exhibit a particular
type of decoherence with the following dynamics of correlations: until some critical value of time only classical correlations are being
destroyed while a decrease of quantum
correlations starts only after this critical time and this decrease is quantified via the quantum discord. This phenomenon, referred to as
the sudden transition of quantum correlations, occurs even in situations where
entanglement is monotonically decreasing since the initial time.

Even if under decoherence scenarios the geometric quantum discord could be more fragile than the quantum discord, as it is 
exemplified for diverse channels in \cite
{Ram:13}, this measure of quantum correlations is a useful concept to analyze quantum correlations dynamics and by using this measure phenomena like the sudden transition have been observed for some three and six-qubit GHZ states \cite{Ram:13}.
Similar studies for other N-qubit states were explored recently in \cite%
{Zhu.etal:22} and also in \cite{Hua.Qiu:16}.

The latter investigations \cite{Ram:13,Zhu.etal:22, Hua.Qiu:16} indicate that our explicit exact analytical expression (\ref{25__}) for the geometric
quantum discord could be a fundamental tool to study under diverse decoherence the time evolution of quantum correlations
in a general two-qudit system, where, to our knowledge, this measure has not been
explored. A possible other application of the geometric quantum discord is to quantify usefulness of a given state for teleportation tasks as long as it interacts
with the environment \cite{Adi.Ban:12}. 

We also expect that our new results will be of relevance in the growing field of relativistic quantum information. We can see the first steps on this direction by recent applications \cite{Ban.Bas.etal:23, Fan.Wen.etal:24} of geometric quantum discord for quantifying quantum correlations in quantum gravity contexts. 

With these investigations \cite{Ram:13,Zhu.etal:22, Hua.Qiu:16,Adi.Ban:12} in mind, we believe that our new results on the geometric quantum discord for an arbitrary  two-qudit state are important not only from the general theoretical point of view but also for the use of this measure of quantum correlations  in many practical tasks involving  two-qudit systems. 

\section*{Acknowledgment}

The authors acknowledge the valuable comments of the anonymous reviewers. The study was implemented in the framework of the Basic Research Program at
the HSE University.

\section{Appendix A}

\textbf{The proof of Lemma 1}. In view of relations (\ref{15__}), we have
\begin{align}
\Pi _{\Omega _{\Upsilon _{d}}}^{2} &=\left( \frac{d-1}{d}\right)
^{2}\sum_{k,l\in \{1,\dots ,d\}}|y_{k}\rangle \langle y_{k}|y_{l}\rangle
\langle y_{l}|\text{ }  \tag{A1}  \label{A1} \\
& =\left( \frac{d-1}{d}\right) ^{2}\sum_{k=1,..,d}|y_{k}\rangle \langle
y_{k}|\text{ }-\frac{d-1}{d^{2}}\sum_{k\neq l\in \{1,\dots
,d\}}|y_{k}\rangle \langle y_{l}|\text{ }  \notag 
\\
&=\left( \frac{d-1}{d}\right) ^{2}\sum_{k=1,..,d}|y_{k}\rangle \langle
y_{k}|\text{ }  \notag \\
& 
-\frac{d-1}{d^{2}}\sum_{k,l\in \{1,\dots ,d\}}|y_{k}\rangle \langle y_{l}|%
\text{ }+\frac{d-1}{d^{2}}\sum_{k=1,..,d}|y_{k}\rangle \langle y_{k}|  \notag 
\end{align}
\begin{equation}
=\frac{d-1}{d}\sum_{k=1,..,d}|y_{k}\rangle \langle y_{k}|\text{ }=\Pi_{\Omega _{\Upsilon _{d}}}. \ \ \  \phantom{..........................}  \notag \\
\end{equation}
Since

\begin{equation}
\mathrm{tr}[\Pi _{\Omega _{\Upsilon _{d}}}]=d-1,  \tag{A2}
\label{A2}
\end{equation}
the rank of every projection $\Pi _{\Omega _{\Upsilon _{d}}}$ is equal to $\left( d-1\right)$. 

Note that an orthogonal
projection $\Pi _{\Omega _{\Upsilon _{d}}}$ has eigenvalue $1$ of
multiplicity $(d-1)$ and eigenvalue $0$ of multiplicity $d(d-1)$. Relations (\ref{A1}) and (\ref{A2}) prove Lemma 1. $\blacksquare$

The proof of Theorem 1 implies the following new  general statement, which we use for finding the bounds in  Theorem 3.

\begin{proposition}
For an arbitrary positive Hermitian operator $A$ on $\mathbb{R}^{d^{2}-1}$ and the orthogonal projections
\begin{equation}
\Pi _{\Omega _{y}}=\frac{d-1}{d}\sum_{k=1}^{d}|y_{k}\rangle \langle y_{k}|\ ,
\label{A3}
\tag{A3}
\end{equation}
on $\mathbb{R}^{d^{2}-1}$ of rank $d-1$, which are specified in Lemma 1, the maximum
\begin{equation}
\max_{\Omega _{y}}\mathrm{tr}[A\Pi _{\Omega _{y}}]=\zeta _{1}+\cdots +\zeta
_{d-1},  \label{A4} \tag{A4}
\end{equation}%
where $\zeta _{1}\geq \zeta _{2}\geq \cdots \geq \zeta _{d^{2}-1}\geq 0$ are
the eigenvalues\ of $A$ listed in the decreasing order with the
corresponding algebraic multiplicities.
\end{proposition}

\end{document}